\newtheorem{theorem}{Theorem}[section]
\newtheorem{remark}{Remark}[section]
\begin{document}
\raggedbottom

\title{Electromagnetic Modes in Spherical Cavities: Complete Theory of Angular Spectra, Dispersion Relations, and Self-Adjoint Extensions}

\author{Mustafa Bakr}
\email{mustafa.bakr@physics.ox.ac.uk}
\affiliation{Clarendon Laboratory, Department of Physics, University of Oxford}
\author{Tongyu Zhang}
\author{Smain Amari}
%\date{\today}

\begin{abstract}
We present a complete theory of electromagnetic modes in spherical cavities, resolving fundamental questions about the nature of angular quantization. The standard result that angular indices $(\ell,m)$ must be integers is shown to be a consequence of domain constraints---regularity at both poles and single-valuedness in the azimuthal coordinate---rather than a requirement imposed by Maxwell's equations themselves. We prove that, for the sectoral case $\nu=m$, the function $\sin^{m}\theta$ exactly solves the angular eigenvalue equation for any real $m>0$, giving rise to a continuous dispersion curve. We demonstrate why non-sectoral modes (tesseral and zonal) appear only at isolated integer points on the full sphere, and show how boundary modifications such as cones and wedges convert these isolated points into continuous families of modes. Complete field solutions, wave impedances, and energy integrability conditions are derived. At the limiting point $(\nu, m) = (0, 0)$, the electromagnetic field vanishes identically while the underlying Debye potential remains non-trivial---a distinction with implications for mode counting that connects to longstanding questions in gauge theory and cavity quantization. Full-wave simulations validate the theoretical predictions with sub-percent accuracy. These results raise the possibility of structural analogues in wave equations on curved spacetimes, where conical deficits or horizon excisions similarly modify the angular domain.
\end{abstract}

\maketitle

\section{Introduction and Motivation}

The electromagnetic modes of a perfectly conducting spherical cavity constitute one of the foundational problems of classical electrodynamics. The complete solution, first systematically presented by Stratton~\cite{Stratton1941} and elaborated in the comprehensive treatise of Morse and Feshbach~\cite{MorseFeshbach1953}, expresses the resonant frequencies in terms of three discrete indices: the angular momentum index $\ell = 0, 1, 2, \ldots$, the azimuthal index $m$ satisfying $|m| \leq \ell$, and the radial index $n = 1, 2, 3,\ldots$. The resonant frequencies for transverse electric (TE) and transverse magnetic (TM) modes take the well-known forms
\begin{equation}
f_{\ell m n}^{\text{TE}} = \frac{c \cdot x_{\ell n}}{2\pi a}, \qquad
f_{\ell m n}^{\text{TM}} = \frac{c \cdot x'_{\ell n}}{2\pi a},
\label{eq:standard_resonances}
\end{equation}
where $a$ is the cavity radius, $c$ the speed of light, $x_{\ell n}$ denotes the $n$th zero of the spherical Bessel function $j_\ell(x)$, and $x'_{\ell n}$ denotes the $n$th zero of the derivative $[xj_\ell(x)]'$. These results appear in every modern textbook on electromagnetic theory, from Jackson~\cite{Jackson1999} to Pozar~\cite{Pozar2012}, and have served as the starting point for countless applications in microwave engineering, antenna design, and cavity quantum electrodynamics.

The standard explanation for the integer nature of $\ell$ proceeds as follows: separation of variables in the scalar Helmholtz equation leads to the associated Legendre equation for the polar angle dependence, and regularity of the solution at both poles $\theta = 0$ and $\theta = \pi$ requires the separation constant $\lambda = \nu(\nu+1)$ to take the form $\ell(\ell+1)$ for non-negative integer $\ell$. This argument, while mathematically correct for the full sphere with standard boundary conditions, obscures a deeper structural truth that forms the subject of the present investigation.

The situation may be illuminated by comparison with the cylindrical case. In a recent study of azimuthally propagating electromagnetic waves in cylindrical cavities~\cite{BakrAmari2023, BakrAmari2025}, we demonstrated that the azimuthal propagation constant $\nu$ varies continuously with frequency, with integer values appearing not as fundamental constraints but as resonance conditions imposed by the requirement of single-valued fields around the full azimuthal range. The dispersion relation $\nu(\omega)$ emerges as the primary physical quantity, and resonances represent discrete samples of this continuous curve. This perspective, which privileges propagation over resonance, proves effective for the analysis of complex microwave structures including dual-mode filters and multiplexers.

The natural question that motivates the present work is whether an analogous continuous structure exists for spherical cavities. Can the integer angular momentum indices be understood as samples from a continuous spectrum? If so, what is the nature of this underlying continuum, and how does it connect to the discrete resonances observed in practice? The answer, as we shall demonstrate, is both affirmative and nuanced. A continuous dispersion curve does exist, but its structure is richer than in the cylindrical case due to the two-dimensional nature of the spherical surface. Three distinct families of modes must be distinguished: sectoral modes with $\ell = |m|$, tesseral modes with $\ell > |m| > 0$, and zonal modes with $m = 0$. Of these, only the sectoral family admits a continuous extension to non-integer indices on the full sphere. The tesseral and zonal families exist only at isolated integer points, a restriction that can be traced to the singular behavior of the associated Legendre functions at the south pole for non-integer degree. This asymmetry between mode families, which appears not to have been explicitly recognized in the prior literature, has significant implications for both the mathematical structure of the problem and its physical applications.

The theoretical framework we develop draws upon the classical theory of singular Sturm-Liouville problems and their self-adjoint extensions, as exposited in the treatises of Coddington and Levinson~\cite{CoddingtonLevinson1955}, Reed and Simon~\cite{ReedSimon1975}, and Zettl~\cite{Zettl2005}. The angular Laplacian on the sphere possesses regular singular points at both poles, and the choice of boundary conditions at these singular endpoints---equivalently, the choice of self-adjoint extension of the formally symmetric differential operator---determines the spectral properties in a fundamental way. The full sphere with regularity demanded at both poles represents one particular self-adjoint realization, yielding the familiar discrete spectrum of integer angular momenta. Alternative boundary conditions, implemented physically through the introduction of conical or wedge-shaped conducting surfaces, select different self-adjoint extensions with qualitatively different spectral characteristics. This perspective reveals that the integer quantization of angular momentum in electromagnetic cavity problems is not an inevitable consequence of Maxwell's equations, but rather reflects the specific domain on which those equations are posed and the boundary conditions thereby imposed.

\subsection{The Spectral Landscape}

Before proceeding to detailed derivations, it is essential to carefully distinguish the two conceptually independent constraints that govern the electromagnetic spectrum of the spherical cavity. Conflating these distinct requirements---as is common in textbook treatments---obscures the physical origin of quantization and the possibilities for its relaxation.

\subsubsection{Azimuthal Propagation and the Quantization of $m$}
The separation of variables introduces an azimuthal factor $e^{im\phi}$, representing a wave propagating in the $\phi$-direction with wavenumber $m$. Locally, this propagation is entirely analogous to that in cylindrical geometries: a wave packet traveling around a circle of constant latitude has no knowledge of global topology, and its local propagation constant $m$ may assume any real value. 

The restriction to integer $m$ arises solely from the requirement of single-valuedness under the transformation $\phi \to \phi + 2\pi$. This is precisely the distinction between the Fourier transform, appropriate for infinite domains or local analysis with continuous spectral parameter, and the Fourier series, appropriate for periodic boundary conditions with discrete spectral parameter. The integer quantization of $m$ is a topological consequence of the $2\pi$-periodicity of the azimuthal coordinate, not a property of the differential equation itself.

When the azimuthal domain is restricted by conducting wedges to a range $0 < \phi < \Phi$ with $\Phi < 2\pi$, the single-valuedness constraint is replaced by boundary conditions at the wedge surfaces. The admissible values become $m = n\pi/\Phi$, which are generically non-integer, confirming that integer $m$ is indeed a consequence of domain geometry rather than an intrinsic electromagnetic requirement.

\subsubsection{Polar Regularity and the Constraint on $\nu$}
The polar dependence is governed by the associated Legendre equation
\begin{equation}
\frac{1}{\sin\theta}\frac{d}{d\theta}\left(\sin\theta\frac{d\Theta}{d\theta}\right) + \left[\nu(\nu+1) - \frac{m^2}{\sin^2\theta}\right]\Theta = 0,
\label{eq:angular_ode}
\end{equation}
which possesses regular singular points at the poles $\theta = 0$ and $\theta = \pi$. Unlike the azimuthal equation on a topologically closed circle, this is a boundary-value problem on the finite interval $(0, \pi)$ with singular endpoints. The constraint on $\nu$ arises from demanding that solutions remain regular at both poles---specifically, that the electromagnetic energy density be integrable near these points.

This polar boundary-value problem is of Sturm-Liouville type, and the admissible values of $\nu$ depend on the parameter $m$ appearing in the equation. The crucial observation, which forms the mathematical core of this paper, is that the nature of this dependence varies dramatically according to the relationship between $\nu$ and $m$.

\subsubsection{The Three Mode Families}

For integer indices satisfying the standard constraints, spherical harmonic modes fall into three families based on their nodal structure. The \textit{sectoral harmonics} have $\ell = |m|$, with angular dependence proportional to $(\sin\theta)^{|m|} e^{im\phi}$ concentrated near the equator. Their nodal lines are $2|m|$ meridional great circles dividing the sphere into sectors. The \textit{tesseral harmonics} satisfy $\ell > |m| > 0$, with nodal patterns combining latitudinal circles and meridional arcs to tessellate the sphere. The \textit{zonal harmonics} have $m = 0$ and are axially symmetric, with nodal lines at circles of constant latitude dividing the sphere into zones.

These families behave fundamentally differently under continuation to non-integer parameters:
For the \textit{sectoral case} $\nu = m$, we shall prove that the function $\Theta(\theta) = (\sin\theta)^m$ exactly solves equation~\eqref{eq:angular_ode} for any real $m > 0$, and remains regular at both poles. The condition $\nu = m$ thus defines a continuous curve in parameter space along which the polar boundary-value problem admits globally regular solutions, regardless of whether $m$ is an integer.
For \textit{non-sectoral cases}---where $\nu \neq m$---the situation differs qualitatively. The general solution regular at $\theta = 0$, expressible through associated Legendre functions $P_\nu^m(\cos\theta)$, develops a singularity at $\theta = \pi$ whenever $\nu$ is not an integer. For the zonal case $m = 0$, this singularity is logarithmic; for tesseral cases with $m \neq 0$, it is of power-law type. In both cases, the singular coefficient contains a factor $\sin(\nu\pi)$ that vanishes if and only if $\nu$ is an integer, explaining why non-sectoral modes exist only at isolated points in the $(\nu, m)$ plane.

\subsubsection{Independence of the Two Constraints $m$ and $\nu$}

The constraints on $m$ and $\nu$ are logically and physically independent. The integer nature of $m$ follows from azimuthal topology---the requirement of single-valuedness around a complete circuit in $\phi$. The constraint on $\nu$ follows from polar regularity---a boundary condition at the singular endpoints $\theta = 0$ and $\theta = \pi$ of the polar domain.

The standard textbook statement that ``$\ell$ and $m$ must be integers with $|m| \leq \ell$'' combines these distinct requirements without acknowledging their independence. A more precise statement is: (i) $m$ must be an integer for fields on the full azimuthal circle, but may be any real number if the azimuthal domain is restricted; (ii) for a given $m$, the degree $\nu$ must take integer values $\ell \geq |m|$ for regularity at both poles, \textit{except} along the sectoral line $\nu = m$ where regularity holds for all $m > 0$.

This distinction becomes physically consequential when boundary modifications are introduced. A wedge restricting the azimuthal range permits non-integer $m$ while leaving polar constraints unchanged. A cone removing the south pole relaxes polar regularity, permitting non-integer $\nu$ even for zonal modes. The combination of wedge and cone provides access to continuous two-parameter families that reveal the full structure underlying the discrete spectrum of the unmodified cavity.

\subsubsection{Comparison with the Cylindrical Case}

The contrast with cylindrical cavities illuminates why the spherical case is fundamentally richer. In a cylinder, the cross-sectional geometry involves a single angular coordinate $\phi$ with no singular points in its natural domain $[0, 2\pi)$. The radial Bessel equation and the azimuthal equation are independent, and both admit solutions for continuous parameter values. The azimuthal propagation constant $\nu$ varies continuously with frequency through the dispersion relation $\nu(\omega)$, with integer values appearing as resonance conditions when single-valuedness is imposed. Every mode family---regardless of radial structure---lies on such a continuous dispersion curve.

The sphere presents a qualitatively different situation. The two-dimensional angular domain $(0,\pi) \times [0, 2\pi)$ involves the polar coordinate $\theta$ whose equation possesses singular endpoints at both poles. The angular eigenvalue problem couples $\nu$ and $m$ through the associated Legendre equation, creating an interdependence absent in the cylindrical case. Global regularity at both poles collapses most of the $(\nu, m)$ parameter plane to an integer lattice, with only the sectoral line $\nu = m$ surviving as a continuous curve.

This explains why the ``azimuthal dispersion'' perspective, so natural and powerful for cylindrical structures, requires careful reformulation for spherical geometries. The continuous dispersion curve does exist---it is the sectoral line---but it represents only one family among three. The tesseral and zonal families have no cylindrical analogue: they are intrinsically discrete, existing only at isolated points with no underlying continuum.

Boundary modifications restore the cylinder-like behavior. A cone at $\theta = \theta_c$ truncates the polar domain, removing the south-pole singularity and converting the isolated zonal spectrum into a continuous family $\nu(\theta_c)$. A wedge of opening $\Phi$ samples the sectoral curve at $m = n\pi/\Phi$. The combination of cone and wedge recovers the full two-parameter continuous structure, analogous to the unrestricted parameter space of cylindrical modes.

\subsubsection{Summary of Contributions}

To position this work relative to the extensive prior literature on spherical harmonics and cavity modes, we enumerate the principal contributions that appear to be new:

First, we establish that the elementary function $\Theta(\theta) = (\sin\theta)^m$ exactly solves the associated Legendre equation for the sectoral case $\nu = m$ and \textit{any real} $m > 0$, not merely for positive integers. While this solution is well known for integer $m$ as the highest-weight spherical harmonic, its validity for continuous $m$ and its interpretation as a ``dispersion curve'' appear not to have been emphasized in the electromagnetic context.

Second, we provide a clear classification of the spectral structure: the sectoral family forms a continuous one-parameter curve in the $(\nu, m)$ plane, while the tesseral and zonal families exist only at isolated integer points when the full sphere is considered. This asymmetry, traceable to the $\sin(\nu\pi)$ factor in the singular coefficient of the Legendre function at the south pole, resolves the question of why spherical cavities do not exhibit the universal continuous dispersion found in cylindrical geometries.

Third, we frame the transition from discrete to continuous spectra through the theory of self-adjoint extensions: different physical boundary conditions (full sphere, cone-truncated, wedge-restricted) correspond to different self-adjoint realizations of the angular Laplacian, each with its characteristic spectral properties. This perspective unifies the treatment of various cavity geometries within a single mathematical framework.

Fourth, we validate the theoretical predictions through full-wave electromagnetic simulations. For wedge geometries, eight modes with non-integer azimuthal index $m=1/2$ show agreement within $1.4\%$ (mean $0.9\%$). For conical truncations accessing the continuous $\nu<1$ branch, six configurations spanning $\theta_c\in(0.4^\circ,34^\circ)$ yield sub-percent agreement with theory. These comparisons, performed without adjustable parameters, confirm the practical utility of the framework for cavity design.

\subsection{Organization of This Paper}

The remainder of this paper is organized as follows. Section~\ref{math} develops the mathematical foundations, beginning with the Frobenius analysis of the associated Legendre equation near its singular points and culminating in the proof that $(\sin\theta)^m$ exactly solves the equation when $\nu = m$. We demonstrate explicitly why this algebraic simplification occurs only on the sectoral line and why the tesseral and zonal cases necessarily involve singular functions for non-integer degree. Section~\ref{EM} presents the complete electromagnetic field solutions derived from the Debye potential formalism. All six field components are given explicitly for both TE and TM polarizations, and the wave impedances characterizing azimuthal power flow are derived. The dispersion relations connecting frequency to the angular indices are obtained, along with numerical data for a representative cavity. Section~\ref{boundary} addresses the role of boundary modifications in transforming the spectral structure. We show how conical conducting boundaries remove the south pole from the domain, eliminating the constraint that produces discrete zonal spectra and yielding instead a continuous family of modes parameterized by the cone angle. Similarly, azimuthal wedges sample the continuous sectoral dispersion curve at non-integer values determined by the wedge opening angle. The theoretical framework is that of self-adjoint extensions of the angular Laplacian, which provides a unified language for describing how different physical boundary conditions select different spectral realizations. Section~\ref{energy} treats energy considerations, demonstrating the conditions under which the field energy remains finite and connecting the limiting case of vanishing angular index to the theory of antenna feed structures and defect-supported modes. Section~\ref{validation} validates the theoretical framework through full-wave electromagnetic simulations, comparing predicted resonant frequencies against finite-element calculations for wedge geometries with non-integer azimuthal index and conical truncations accessing the continuous $\nu<1$ branch; agreement at the sub-percent level is achieved across all configurations tested. Section~\ref{antenna} establishes a direct structural connection between the cavity problem studied here and classical antenna theory, clarifying the relationship between bounded-domain cavity eigenmodes and radiating biconical antenna solutions through their shared angular operator and differing radial closure conditions. The paper concludes with a summary of results and a discussion of broader implications. Comprehensive appendices (\ref{app:connection_formulas}---\ref{app:zeros}) provide derivations of the second solution for the sectoral case, the singular behavior of Legendre functions at the antipodal point, asymptotic expansions of Bessel function zeros, and other technical material.

\section{Mathematical Foundations}
\label{math}
\subsection{The Separated Equations}

The scalar Helmholtz equation $(\nabla^2 + k^2)\Psi = 0$, which governs both Debye potentials for electromagnetic fields in source-free regions, takes the following form in spherical coordinates $(r, \theta, \phi)$:
\begin{equation}
\frac{1}{r^2}\frac{\partial}{\partial r}\left(r^2\frac{\partial\Psi}{\partial r}\right) + \frac{1}{r^2\sin\theta}\frac{\partial}{\partial\theta}\left(\sin\theta\frac{\partial\Psi}{\partial\theta}\right) + \frac{1}{r^2\sin^2\theta}\frac{\partial^2\Psi}{\partial\phi^2} + k^2\Psi = 0.
\label{eq:helmholtz_spherical}
\end{equation}
Here $k = \omega\sqrt{\varepsilon\mu}$ is the wavenumber, with $\omega$ the angular frequency and $\varepsilon$, $\mu$ the permittivity and permeability of the medium filling the cavity. The standard separation ansatz $\Psi(r,\theta,\phi) = R(r)\Theta(\theta)\Phi(\phi)$ leads to three ordinary differential equations. The azimuthal equation admits the elementary solution $\Phi(\phi) = e^{im\phi}$, where the separation constant $m$ must be an integer if single-valuedness is required over the full azimuthal range $0 \leq \phi < 2\pi$. For domains with restricted azimuthal extent, such as cavities containing conducting wedges, non-integer values of $m$ become permissible, a point to which we return in Part IV. The radial equation takes the form
\begin{equation}
r^2\frac{d^2R}{dr^2} + 2r\frac{dR}{dr} + [k^2r^2 - \nu(\nu+1)]R = 0,
\label{eq:radial_ode}
\end{equation}
which is recognized as the spherical Bessel equation of order $\nu$. The general solution of equation~\eqref{eq:radial_ode} is a linear combination of spherical Bessel functions:
\begin{equation}
R(r) = A j_\nu(kr) + B y_\nu(kr),
\label{eq:radial_general}
\end{equation}
where $j_\nu$ and $y_\nu$ are the spherical Bessel functions of the first and second kinds (the latter also called the spherical Neumann function). Their asymptotic behaviors near the origin are
\begin{equation}
j_\nu(x) \sim \frac{x^\nu}{(2\nu+1)!!}, \qquad y_\nu(x) \sim -\frac{(2\nu-1)!!}{x^{\nu+1}} \quad \text{as } x \to 0^+,
\label{eq:bessel_asymptotics}
\end{equation}
for $\nu > -1/2$, where $(2\nu+1)!! = (2\nu+1)(2\nu-1)\cdots$ denotes the double factorial. The two radial solutions have complementary convergence properties. Near the origin, $j_\nu(kr) \sim r^\nu$ while $y_\nu(kr) \sim r^{-(\nu+1)}$. The electromagnetic energy density involves field components that scale as derivatives of these functions; requiring finite total energy in a neighborhood of the origin yields the convergence conditions $\nu > -1/2$ for $j_\nu$ and $\nu < -1/2$ for $y_\nu$. Since the eigenvalue 
$\lambda = \nu(\nu+1)$ is invariant under $\nu \to -1-\nu$, each physical 
eigenvalue admits two representatives related by this symmetry. Choosing 
the representative with $\nu > -1/2$, the spherical Bessel function $j_\nu$ 
is always the finite-energy solution. This justifies both the restriction 
$\nu > -1/2$ and the choice $R(r) = A j_\nu(kr)$.

The connection between the radial and angular separation constants through equation~\eqref{eq:radial_ode} lies at the heart of the spectral theory developed herein. The angular equation, displayed previously as equation~\eqref{eq:angular_ode}, is the associated Legendre equation with degree $\nu$ and order $m$. This equation has been studied exhaustively since the nineteenth century, with comprehensive treatments given by Hobson~\cite{Hobson1931}, Whittaker and Watson~\cite{WhittakerWatson1927}, and in the modern era by Olver et al.~\cite{DLMF}. Nevertheless, certain structural features relevant to the electromagnetic problem appear not to have been fully appreciated, and we develop them in detail in what follows.

\subsection{Frobenius Analysis of the Angular Equation}
\label{subsec:frobenius}

The poles $\theta = 0$ and $\theta = \pi$, corresponding to $x = \cos\theta = +1$ and $x = -1$ respectively, are regular singular points of the associated Legendre equation. The local behavior of solutions near these points determines the global structure of the spectrum through what is known as the connection problem: how are the local solutions at one singular point expressed in terms of those at the other?

Consider first the north pole $\theta = 0$, corresponding to $x = 1$. Transforming equation~\eqref{eq:angular_ode} to the variable $x = \cos\theta$ yields the associated Legendre equation in its standard form:
\begin{equation}
(1-x^2)\frac{d^2y}{dx^2} - 2x\frac{dy}{dx} + \left[\nu(\nu+1) - \frac{m^2}{1-x^2}\right]y = 0.
\label{eq:legendre_assoc_x}
\end{equation}
Near $x = 1$, we introduce the local coordinate $\xi = 1 - x$ and seek solutions of the form $y = \xi^\alpha \sum_{n=0}^\infty a_n \xi^n$. The indicial equation, obtained by substituting this ansatz and collecting the leading-order terms, reads
\begin{equation}
\alpha(\alpha - 1) + \alpha - \frac{m^2}{4} = 0,
\end{equation}
from which the two roots $\alpha = \pm|m|/2$ follow immediately. Transforming back to the angular variable, the two independent local solutions near the north pole behave as
\begin{equation}
\Theta^{(+)}_{\text{north}}(\theta) \sim \theta^{|m|}, \qquad \Theta^{(-)}_{\text{north}}(\theta) \sim \theta^{-|m|}
\label{eq:north_pole_behavior}
\end{equation}
as $\theta \to 0^+$. For $m \neq 0$, the first solution vanishes at the pole while the second diverges; for $m = 0$, both solutions approach finite limits but with different derivatives. Physical solutions, requiring bounded energy density, must exclude the singular branch proportional to $\theta^{-|m|}$.

An entirely analogous analysis applies at the south pole $\theta = \pi$. With the local coordinate $\zeta = \pi - \theta$, the two independent solutions behave as
\begin{equation}
\Theta^{(+)}_{\text{south}}(\theta) \sim (\pi-\theta)^{|m|}, \qquad \Theta^{(-)}_{\text{south}}(\theta) \sim (\pi-\theta)^{-|m|}
\label{eq:south_pole_behavior}
\end{equation}
as $\theta \to \pi^-$. Again, regularity demands exclusion of the singular branch.

The crucial observation is that the local analyses at the two poles are independent. A solution that is regular at the north pole---proportional to $\theta^{|m|}$ as $\theta \to 0$---will generically be a linear combination of both regular and singular behaviors at the south pole. The requirement that the solution be simultaneously regular at both poles constitutes a global constraint that restricts the admissible values of $\nu$. It is this constraint that we now examine in detail, showing that its character depends fundamentally on whether $\nu$ equals $m$ (the sectoral case), $\nu$ exceeds $|m|$ (the tesseral case with $m \neq 0$), or $m$ vanishes (the zonal case).

\subsection{The Sectoral Case: Exact Solution for $\nu = m$}

We now establish the central mathematical result of this paper: when the degree $\nu$ equals the order $m$, the associated Legendre equation admits a simple closed-form solution that is regular at both poles for any real $m > 0$.

\begin{theorem}[Sectoral Solution]
\label{thm:sectoral}
For $\nu = m$ with $m$ any positive real number, the function
\begin{equation}
\Theta(\theta) = (\sin\theta)^m
\label{eq:sectoral_solution}
\end{equation}
is an exact solution of the associated Legendre equation~\eqref{eq:angular_ode}. This solution vanishes at both poles $\theta = 0$ and $\theta = \pi$ and is everywhere positive on the open interval $(0, \pi)$.
\end{theorem}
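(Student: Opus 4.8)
The plan is to prove the statement by direct substitution, since the claim is purely a verification that a specific elementary function annihilates the angular operator. First I would set $\nu = m$, so that the separation constant becomes $\nu(\nu+1) = m(m+1) = m^2 + m$, and insert $\Theta(\theta) = (\sin\theta)^m$ into equation~\eqref{eq:angular_ode}. Differentiating once gives $\Theta' = m(\sin\theta)^{m-1}\cos\theta$, hence $\sin\theta\,\Theta' = m(\sin\theta)^m\cos\theta$; differentiating again and dividing by $\sin\theta$ would yield the Sturm--Liouville term
\begin{equation}
\frac{1}{\sin\theta}\frac{d}{d\theta}\left(\sin\theta\,\Theta'\right) = m^2(\sin\theta)^{m-2}\cos^2\theta - m(\sin\theta)^m.
\end{equation}

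The one genuine algebraic manoeuvre is to eliminate $\cos^2\theta$ with the Pythagorean identity $\cos^2\theta = 1 - \sin^2\theta$, which splits the first term into a piece proportional to $(\sin\theta)^{m-2}$ and a piece proportional to $(\sin\theta)^m$. Collecting them, the Sturm--Liouville contribution reduces to $m^2(\sin\theta)^{m-2} - m(m+1)(\sin\theta)^m$. The remaining terms of the equation supply $\left[m(m+1) - m^2/\sin^2\theta\right](\sin\theta)^m = m(m+1)(\sin\theta)^m - m^2(\sin\theta)^{m-2}$. Adding the two contributions, the $(\sin\theta)^{m-2}$ terms cancel and the $(\sin\theta)^m$ terms cancel, leaving identically zero. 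I would then emphasise that this cancellation invokes only the two facts $\nu(\nu+1) = m(m+1)$ and $\cos^2\theta + \sin^2\theta = 1$, and never the integrality of $m$, which is exactly what delivers the ``any real $m > 0$'' conclusion.

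For the boundary and positivity assertions I would argue as follows. On the open interval $(0,\pi)$ we have $\sin\theta > 0$, so $(\sin\theta)^m > 0$ for any real exponent, giving strict positivity. Since $m > 0$, both limits $(\sin\theta)^m \to 0$ hold as $\theta \to 0^+$ and as $\theta \to \pi^-$. I would further note that $(\sin\theta)^m \sim \theta^m$ near the north pole and $\sim (\pi-\theta)^m$ near the south pole, so with $|m| = m$ this solution coincides with the \emph{regular} Frobenius branches $\theta^{|m|}$ and $(\pi-\theta)^{|m|}$ catalogued in equations~\eqref{eq:north_pole_behavior} and~\eqref{eq:south_pole_behavior}, confirming simultaneous regularity at both endpoints.

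There is no serious obstacle here: the whole result is a short verification once the two derivatives are taken, and the only point demanding care is bookkeeping of the two distinct powers $(\sin\theta)^{m-2}$ and $(\sin\theta)^m$ so that the cancellation is seen to be exact rather than leading-order. The conceptually substantive question---why the cancellation is special to the line $\nu = m$ and breaks down for $\nu \neq m$---is not required for this theorem but is precisely what motivates the non-sectoral (tesseral and zonal) analysis developed in the subsections that follow.
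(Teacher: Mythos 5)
Your proposal is correct and follows essentially the same route as the paper's proof: direct substitution, computation of the Sturm--Liouville term, and cancellation via $\cos^2\theta = 1 - \sin^2\theta$, with only a trivial difference in whether the Pythagorean identity is applied before or after adding the potential term. Your explicit treatment of the positivity and pole-vanishing claims (which the paper relegates to remarks after its proof) is a welcome completion, since those assertions are part of the theorem statement.
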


\begin{proof}
The proof proceeds by direct substitution. Writing $\Theta = \sin^m\theta$ for notational brevity, we compute the required derivatives:
\begin{align}
\frac{d\Theta}{d\theta} &= m\sin^{m-1}\theta\cos\theta, \\
\sin\theta\frac{d\Theta}{d\theta} &= m\sin^m\theta\cos\theta, \\
\frac{d}{d\theta}\left(\sin\theta\frac{d\Theta}{d\theta}\right) &= m\frac{d}{d\theta}(\sin^m\theta\cos\theta) \notag \\
&= m\left[m\sin^{m-1}\theta\cos^2\theta - \sin^{m+1}\theta\right].
\end{align}
The first term in equation~\eqref{eq:angular_ode} therefore becomes
\begin{equation}
\frac{1}{\sin\theta}\frac{d}{d\theta}\left(\sin\theta\frac{d\Theta}{d\theta}\right) = m^2\sin^{m-2}\theta\cos^2\theta - m\sin^m\theta.
\label{eq:first_term}
\end{equation}

For the potential term, we set $\nu(\nu+1) = m(m+1)$ and compute
\begin{align}
\left[\nu(\nu+1) - \frac{m^2}{\sin^2\theta}\right]\Theta &= \left[m(m+1) - \frac{m^2}{\sin^2\theta}\right]\sin^m\theta \notag \\
&= m(m+1)\sin^m\theta - m^2\sin^{m-2}\theta.
\label{eq:potential_term}
\end{align}

Adding equations~\eqref{eq:first_term} and~\eqref{eq:potential_term}:
\begin{align}
&m^2\sin^{m-2}\theta\cos^2\theta - m\sin^m\theta + m(m+1)\sin^m\theta - m^2\sin^{m-2}\theta \notag \\
&= m^2\sin^{m-2}\theta(\cos^2\theta - 1) + m^2\sin^m\theta \notag \\
&= -m^2\sin^{m-2}\theta \cdot \sin^2\theta + m^2\sin^m\theta \notag \\
&= -m^2\sin^m\theta + m^2\sin^m\theta = 0.
\end{align}
The function $(\sin\theta)^m$ thus satisfies equation~\eqref{eq:angular_ode} identically when $\nu = m$, completing the proof.
\end{proof}

Several remarks clarify this result. First, the solution is valid for any positive real $m$, not merely for positive integers. When $m$ is a positive integer, $(\sin\theta)^m$ coincides (up to normalization) with the standard associated Legendre function $P_m^m(\cos\theta)$, but the closed form extends unchanged to non-integer values. This extension provides the mathematical basis for the continuous sectoral branch mentioned in the introduction. Second, the algebraic cancellation demonstrated in the proof is specific to the case $\nu = m$. For $\nu \neq m$, no such simplification occurs, and one must work with the full machinery of hypergeometric functions to express solutions. We shall see below that this general case necessarily involves singular behavior at one or both poles for non-integer $\nu$. Third, the boundary behavior of the sectoral solution is manifest: as $\theta \to 0$, we have $(\sin\theta)^m \sim \theta^m \to 0$, and similarly as $\theta \to \pi$, $(\sin\theta)^m \sim (\pi-\theta)^m \to 0$. The solution vanishes at both poles with precisely the regular Frobenius exponent $+|m|$, as required for finite energy density. For non-integer $m$, while the function remains real and positive on $(0,\pi)$, the azimuthal factor $e^{im\phi}$ becomes multi-valued, necessitating a physical domain that excludes the full azimuthal range---a point developed in Part IV in connection with wedge geometries.

The second linearly independent solution for the case $\nu = m$ can be constructed by the method of reduction of order. If $\Theta_1 = \sin^m\theta$ is one solution, the second takes the form
\begin{equation}
\Theta_2(\theta) = \sin^m\theta \int^\theta \frac{d\theta'}{\sin\theta' \cdot \sin^{2m}\theta'} = \sin^m\theta \int^\theta \frac{d\theta'}{\sin^{2m+1}\theta'}.
\label{eq:second_solution_sectoral}
\end{equation}
Near $\theta = 0$, the integrand behaves as $\theta'^{-(2m+1)}$, yielding $\Theta_2 \sim \theta^m \cdot \theta^{-2m} = \theta^{-m}$ after multiplication by the prefactor $\sin^m\theta \sim \theta^m$. This is precisely the singular Frobenius branch, diverging as $\theta^{-m}$ at the north pole. By a similar argument, $\Theta_2$ diverges at the south pole as well. The solution $(\sin\theta)^m$ is therefore the unique globally regular solution of the associated Legendre equation for $\nu = m$, establishing its distinguished role in the spectral theory.

%==============================================================================
\subsection{Non-Sectoral Cases: The Origin of Discreteness}
%==============================================================================

We now demonstrate why the tesseral and zonal mode families do not admit continuous extensions to non-integer degree on the full sphere. The key lies not in any local asymmetry between the poles---both $\theta = 0$ and $\theta = \pi$ are regular singular points with identical Frobenius exponents $\pm|m|$, as established in Section~\ref{subsec:frobenius}---but rather in the \emph{global connection problem} between them.

The associated Legendre function of the first kind, $P_\nu^m(\cos\theta)$, is defined as the solution that is regular at the north pole $\theta = 0$. This is a choice of normalization: one selects the $\theta^{+|m|}$ Frobenius branch at $\theta = 0$ and discards the singular $\theta^{-|m|}$ branch. The nontrivial question is whether this globally defined function remains regular at the south pole $\theta = \pi$, or whether the analytic continuation from north to south pole introduces a singular component.

For the connection problem between the two poles, we must examine how solutions regular at the north pole behave at the south pole. This analysis is complicated by the various conventions for associated Legendre functions in the literature.

The Ferrers function of the first kind, as defined in the DLMF~\cite{DLMF} (\S14.3.1), takes the form
\begin{equation}
\mathsf{P}_\nu^\mu(x) = \left(\frac{1+x}{1-x}\right)^{\mu/2} \mathbf{F}\left(\nu+1, -\nu; 1-\mu; \frac{1-x}{2}\right),
\label{eq:ferrers_DLMF}
\end{equation}
where $\mathbf{F}(a,b;c;z) = {}_2F_1(a,b;c;z)/\Gamma(c)$ is Olver's regularized hypergeometric function. For $\mu > 0$ and non-integer, the prefactor diverges as $x \to 1^-$ (north pole), so this representation does \emph{not} directly give the solution regular at $\theta = 0$.

For non-negative \emph{integer} $\mu = m$, the representation~\eqref{eq:ferrers_DLMF} is indeterminate (the prefactor diverges while $1/\Gamma(1-m) = 0$ for $m \geq 1$), and the proper limiting form is given by DLMF 14.3.4:
\begin{equation}
\mathsf{P}_\nu^m(x) = \frac{(-1)^m \Gamma(\nu+m+1)}{2^m \Gamma(\nu-m+1)} (1-x^2)^{m/2} \mathbf{F}\left(\nu+m+1, m-\nu; m+1; \frac{1-x}{2}\right).
\label{eq:ferrers_integer_m}
\end{equation}
The crucial factor $(1-x^2)^{m/2} = \sin^m\theta$ ensures that this function \emph{vanishes} at both poles for $m \geq 1$, consistent with the regular Frobenius branch $\theta^{+|m|}$ identified in Section~\ref{subsec:frobenius}.

For the physical problem with non-integer $m > 0$ (arising from wedge geometries), we must construct the solution regular at the north pole directly from Frobenius analysis, rather than using the standard Ferrers function. Near $\theta = 0$, the two independent solutions behave as $\theta^{+m}$ (regular) and $\theta^{-m}$ (singular); we select the regular branch. This solution, when analytically continued to the south pole, generically acquires a singular component---and it is this connection problem that restricts $\nu$ to specific values.

The key question is therefore not the local regularity at either pole individually---which can always be achieved by selecting the appropriate Frobenius branch---but whether a \emph{single} solution can be simultaneously regular at \emph{both} poles. For the sectoral case $\nu = m$, the explicit solution $\Theta(\theta) = \sin^m\theta$ manifestly has this property for any $m > 0$, integer or not. For non-sectoral cases, the connection formulas are derived in Appendix~\ref{app:connection_formulas}.

The theory of linear transformations for hypergeometric functions, developed classically by Riemann and Kummer and exposited in modern form in the Digital Library of Mathematical Functions~\cite{DLMF}, provides the necessary connection formulas. The result, specialized to the associated Legendre case, is that $P_\nu^m(\cos\theta)$ contains a term proportional to $(1+\cos\theta)^{-m/2}$ multiplied by a function that approaches a non-zero limit as $\theta \to \pi$. Since $1 + \cos\theta = 2\cos^2(\theta/2) \sim (\pi-\theta)^2/2$ as $\theta \to \pi$, this term behaves as $(\pi-\theta)^{-m}$---the singular Frobenius branch.

For the zonal case $m = 0$, the singularity takes a logarithmic form. The Legendre function $P_\nu(\cos\theta) = P_\nu^0(\cos\theta)$ satisfies the connection formula
\begin{equation}
P_\nu(\cos\theta) = \frac{\sin(\nu\pi)}{\pi}\left[\psi(\nu+1) - \ln\left(\frac{1-\cos\theta}{2}\right)\right] + \text{regular terms},
\label{eq:Pnu_south_pole}
\end{equation}
where $\psi$ denotes the digamma function. The coefficient $\sin(\nu\pi)$ vanishes if and only if $\nu$ is an integer, in which case the logarithmic singularity is absent and $P_\nu$ reduces to a Legendre polynomial, regular at both poles.

This integrality condition has a geometric interpretation: the factor $\sin(\nu\pi)$ measures the failure of the solution to close smoothly upon traversing the full angular domain from pole to pole. For non-integer $\nu$, the analytic continuation of $P_\nu^m$ around the sphere acquires a singular admixture---a monodromy effect that is eliminated precisely when $\nu$ is an integer.

These results establish the following fundamental dichotomy:

\begin{theorem}[Spectral Structure on the Full Sphere]
\label{thm:spectral_structure}
Consider the associated Legendre equation~\eqref{eq:angular_ode} on the interval $\theta \in (0, \pi)$, with the requirement of regularity at both endpoints.

\noindent (i) On the sectoral line $\nu = m$ with $m > 0$, a regular solution exists for every positive real $m$, given explicitly by $(\sin\theta)^m$.

\noindent (ii) For the zonal case $m = 0$, regular solutions exist only when $\nu$ is a non-negative integer: $\nu = 0, 1, 2, \ldots$

\noindent (iii) For tesseral parameters $|m| < \nu$ with $m \neq 0$, regular solutions exist only when $\nu$ is a non-negative integer at least as large as $|m|$: $\nu = |m|, |m|+1, |m|+2, \ldots$
\end{theorem}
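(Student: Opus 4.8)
The plan is to treat the three parts through a single unifying device—the connection problem for the solution regular at the north pole—with part (i) reduced immediately to the preceding theorem. For part (i), observe that Theorem~\ref{thm:sectoral} already exhibits $(\sin\theta)^m$ as an exact solution that is regular (indeed vanishing) at both poles for every real $m>0$, so no further argument is needed. The substance of the proposition lies in parts (ii) and (iii), where $m$ is a non-negative integer (azimuthal single-valuedness on the full sphere) and one must characterize exactly which degrees $\nu$ admit a solution simultaneously regular at $\theta=0$ and $\theta=\pi$.

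For the necessity of integrality, I would fix $m$ and let $y_N(\theta)$ denote the one-dimensional Frobenius branch regular at the north pole, i.e.\ the solution behaving as $\theta^{+|m|}$ as $\theta\to 0^+$ (for $m=0$ this is the Legendre function $P_\nu(\cos\theta)$ normalized by $P_\nu(1)=1$). Regularity at the south pole is then decided by the coefficient of the singular branch $(\pi-\theta)^{-|m|}$—or, for $m=0$, the logarithmic branch—in the expansion of $y_N$ near $\theta=\pi$. Invoking the hypergeometric connection formulas, namely equation~\eqref{eq:Pnu_south_pole} in the zonal case and the analogous power-law formula derived in Appendix~\ref{app:connection_formulas} for $m\neq 0$, this singular coefficient is proportional to $\sin(\nu\pi)$ times a factor that does not vanish in the relevant range. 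Since $\sin(\nu\pi)=0$ if and only if $\nu\in\mathbb{Z}$, and the radial finite-energy condition $\nu>-1/2$ established earlier forces $\nu\geq 0$, regularity at both poles can hold only for non-negative integer $\nu$.

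For sufficiency and the lower bound $\nu\geq|m|$, I would argue as follows. When $\nu=\ell$ is a non-negative integer with $\ell\geq|m|$, the connection coefficient vanishes and $y_N$ reduces to $P_\ell^m(\cos\theta)=c\,(\sin\theta)^{|m|}\times(\text{polynomial in }\cos\theta)$, which is smooth and bounded on $[0,\pi]$, hence manifestly regular at both poles; this establishes existence for every such $\ell$ and, in the zonal case $m=0$, for every integer $\ell\geq 0$. It remains to exclude integer degrees with $0\leq\ell<|m|$ in the tesseral case. Here the naive normalization fails because the Gamma-function prefactor in equation~\eqref{eq:ferrers_integer_m} forces $P_\ell^m\equiv 0$ when $\ell<m$, so one must analyze the genuine regular-at-north branch, whose Frobenius exponents $\pm|m|$ at each pole now differ by an integer. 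In this resonant situation the continuation to the south pole generically produces a non-removable singular (power or logarithmic) admixture, so no nontrivial solution is regular at both poles, confirming $\nu\geq|m|$.

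I expect the main obstacle to be precisely this resonant case—integer $\nu$ with $\nu<|m|$—where the simple $\sin(\nu\pi)$ criterion is inconclusive (it vanishes) yet regularity still fails. Establishing non-removability of the singular admixture there requires tracking the Gamma-function factors and the logarithmic terms that arise when the two Frobenius exponents are integer-separated, which is exactly the delicate part of the connection analysis relegated to Appendix~\ref{app:connection_formulas}. By contrast, if one adopts the strict hypothesis $|m|<\nu$ stated in part (iii), this resonant case lies outside the hypothesis and the proof collapses to the clean $\sin(\nu\pi)$ argument combined with the spherical-harmonic existence statement.
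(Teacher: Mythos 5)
Your proposal takes essentially the same route as the paper: part (i) is delegated to Theorem~\ref{thm:sectoral}, and parts (ii)--(iii) rest on the south-pole connection formulas (Eq.~\eqref{eq:Pnu_south_pole} and Appendix~\ref{app:connection_formulas}), where the coefficient of the singular (logarithmic or power-law) branch carries the factor $\sin(\nu\pi)$ that vanishes exactly at integer $\nu$, with sufficiency supplied by the regularity of $P_\ell^m$ for integer $\ell \geq |m|$. Your additional concern about the resonant case $\nu = \ell < |m|$ is, as you yourself observe, excluded by the hypothesis $|m| < \nu$ of part (iii), so it does not create a gap in the argument.
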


The proof of parts (ii) and (iii) follows from the singularity analysis above: the coefficient of the singular term at $\theta = \pi$ vanishes precisely for integer $\nu$, due to the factor $\sin(\nu\pi)$ in the connection formulas. Full details are provided in Appendix~\ref{app:connection_formulas}.

Theorem~\ref{thm:spectral_structure} has direct physical consequences. The sectoral family, alone among the three mode types, admits a continuous interpolation between its integer members. This continuity reflects the special algebraic structure of the $\nu = m$ case and has no analogue for the other families. The tesseral and zonal modes are intrinsically discrete: they exist only at isolated points in the $(\nu, m)$ parameter space, with no continuous curves connecting them on the full sphere.

\subsection{Graphical Representation of the Spectral Landscape}

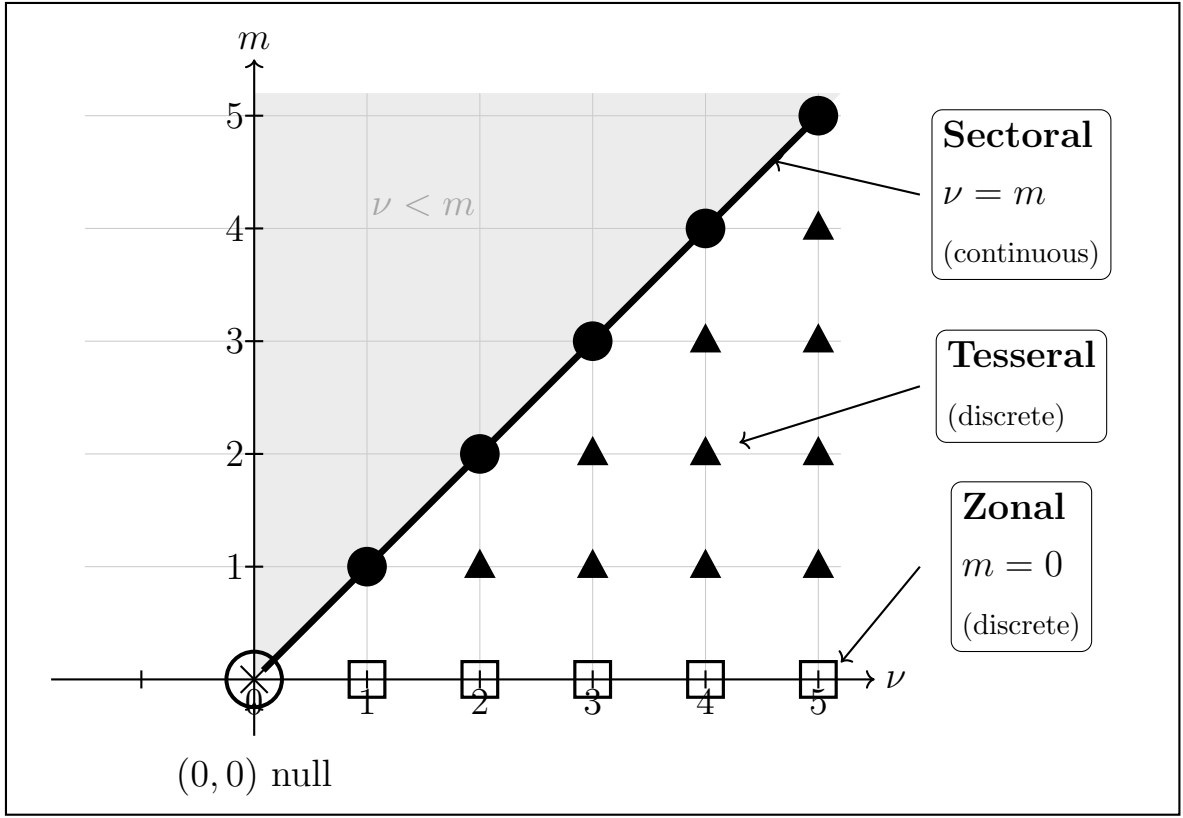
\begin{figure}[t]
\centering
\begin{tikzpicture}[scale=1.5]
    % Box frame
    \draw[thick] (-2.2,-1.2) rectangle (8.2,6);
    
    % Shaded region where \nu < |m|
    \fill[gray!15] (0,0) -- (0,5.2) -- (5.2,5.2) -- cycle;
    
    % Axes
    \draw[->,thick] (-1.8,0) -- (5.5,0) node[right] {\large $\nu$};
    \draw[->,thick] (0,-0.3) -- (0,5.5) node[above] {\large $m$};
    
    % Grid lines (light)
    \foreach \x in {1,2,3,4,5} {
        \draw[gray!40,thin] (\x,0) -- (\x,5.2);
    }
    \foreach \y in {1,2,3,4,5} {
        \draw[gray!40,thin] (-1.5,\y) -- (5.2,\y);
    }
    
    % Axis labels
    \node[below] at (0,0) {\large $0$};
    \foreach \x in {1,2,3,4,5} {
        \node[below] at (\x,0) {\large $\x$};
    }
    \foreach \y in {1,2,3,4,5} {
        \node[left] at (0,\y) {\large $\y$};
    }
    
    % Tick marks
    \foreach \x in {-1,1,2,3,4,5} {
        \draw[thick] (\x,0.08) -- (\x,-0.08);
    }
    \foreach \y in {1,2,3,4,5} {
        \draw[thick] (0.08,\y) -- (-0.08,\y);
    }
    
    % "No solutions" label
    \node[gray!70] at (1.5,4.2) {\large $\nu < m$};
    
    % Sectoral line \nu = m (thick black)
    \draw[black, line width=2.5pt] (0.08,0.08) -- (5.1,5.1);
    
    % Sectoral points (filled circles)
    \foreach \n in {1,2,3,4,5} {
        \fill[black] (\n,\n) circle (5pt);
    }
    
    % Zonal points on m = 0 axis (open squares)
    \foreach \n in {1,2,3,4,5} {
        \draw[black, line width=1.2pt] (\n,0) +(-4.5pt,-4.5pt) rectangle +(4.5pt,4.5pt);
    }
    
    % Tesseral points (filled triangles, pointing up)
    \foreach \x/\y in {2/1, 3/1, 3/2, 4/1, 4/2, 4/3, 5/1, 5/2, 5/3, 5/4} {
        \fill[black] (\x,\y) +(-0.14,-0.1) -- +(0.14,-0.1) -- +(0,0.16) -- cycle;
    }
    
    % Special point at (0,0) - null mode (open circle with X)
    \draw[black, line width=1.5pt] (0,0) circle (7pt);
    \draw[black, thick] (-0.12,-0.12) -- (0.12,0.12);
    \draw[black, thick] (-0.12,0.12) -- (0.12,-0.12);
    
    % Labels in boxes (neutral)
    % Sectoral
    \node[align=left, fill=white, draw=black, rounded corners, inner sep=4pt] 
        at (6.8,4.3) {\large\textbf{Sectoral}\\\large $\nu = m$\\(continuous)};
    \draw[black, thick, ->] (5.9,4.3) -- (4.6,4.6);
    
    % Zonal
    \node[align=left, fill=white, draw=black, rounded corners, inner sep=4pt] 
        at (6.8,1.0) {\large\textbf{Zonal}\\\large $m = 0$\\(discrete)};
    \draw[black, thick, ->] (5.9,1.0) -- (5.2,0.15);
    
    % Tesseral
    \node[align=left, fill=white, draw=black, rounded corners, inner sep=4pt] 
        at (6.8,2.6) {\large\textbf{Tesseral}\\(discrete)};
    \draw[black, thick, ->] (5.9,2.6) -- (4.3,2.1);
    
    % Null mode label
    \node[below] at (0,-0.6) {\large $(0,0)$ null};
    \draw[black, thick,->] (0,-0.5) -- (0,-0.2);
\end{tikzpicture}

\caption{Spectral landscape in the $(\nu, m)$ parameter plane. The sectoral line $\nu = m$ (solid line with filled circles) supports solutions for any real $m > 0$, forming a continuous branch. Zonal modes (open squares, $m = 0$) and tesseral modes (filled triangles, $\nu > m > 0$) exist only at integer points. The region $\nu < m$ (shaded) corresponds to parameters for which no square-integrable solution exists on the full sphere; the physical 
implications of this regime for restricted domains remain unexplored. The point $(0,0)$ marks the boundary of the sectoral family; here the electromagnetic field vanishes while the underlying potential remains non-trivial (see Section~\ref{sec:null}).}
\label{fig:spectral_landscape}
\end{figure}

The structure established above may be visualized in the $(\nu, m)$ parameter plane, as depicted schematically in Figure~\ref{fig:spectral_landscape}. The sectoral line $\nu = m$ forms a diagonal ray from the origin extending to infinity, on which solutions exist for all $m > 0$. Integer points $m = 1, 2, 3, \ldots$ on this line correspond to the standard sectoral modes of the full sphere, while non-integer points become accessible when the azimuthal domain is restricted by conducting wedges.

The zonal axis $m = 0$ is marked by isolated points at $\nu = 0, 1, 2, \ldots$, with no solutions existing between them. Similarly, vertical lines at each non-zero integer $m$ carry isolated tesseral points at $\nu = |m|+1, |m|+2, \ldots$. The point $(\nu, m) = (0, 0)$ is special: while the angular equation admits the regular solution $\Theta = 1$ and the full Debye potential $\Pi = j_0(kr)$ is well-defined, the electromagnetic field extracted from this potential vanishes identically, as we discuss in Section~\ref{sec:null}. This point represents the boundary of the sectoral line but does not itself correspond to a propagating electromagnetic mode. The conical truncations studied here access only the continuous branch $0<\nu<1$.

\section{Electromagnetic Field Solutions}

\label{EM}

\subsection{Debye Potential Formulation}

The electromagnetic field in a source-free region of a homogeneous, isotropic medium may be constructed from scalar Debye potentials~\cite{Debye1909}, an approach that proves particularly natural in spherical coordinates. This formulation, developed systematically by Debye~\cite{Debye1909} and exposited in the treatises of Stratton~\cite{Stratton1941} and Collin~\cite{Collin1991}, expresses the electric and magnetic fields in terms of two scalar functions $\Pi_e$ and $\Pi_m$, each satisfying the Helmholtz equation. The electric Debye potential $\Pi_e$ generates transverse magnetic (TM) modes in which the radial component of the magnetic field vanishes, while the magnetic Debye potential $\Pi_m$ generates transverse electric (TE) modes with vanishing radial electric field.

The field expressions take the form
\begin{align}
\mathrm{E} &= \nabla \times \nabla \times (\mathrm{r}\Pi_e) + i\omega\mu\nabla \times (\mathrm{r}\Pi_m), \label{eq:E_Debye} \\
\mathrm{H} &= \nabla \times \nabla \times (\mathrm{r}\Pi_m) - i\omega\varepsilon\nabla \times (\mathrm{r}\Pi_e), \label{eq:H_Debye}
\end{align}
where we employ the time-harmonic convention $e^{-i\omega t}$ throughout. The wavenumber $k = \omega\sqrt{\varepsilon\mu}$ and intrinsic impedance $\eta = \sqrt{\mu/\varepsilon}$ characterize the medium.

The combination $\mathbf{r}\Pi = \hat{r}(r\Pi)$ appearing in equations~\eqref{eq:E_Debye}--\eqref{eq:H_Debye} is essential: the factor of $r$ multiplying the scalar potential $\Pi$ ensures that the radial field components behave as $r^{\nu-1}$ rather than $r^{\nu-2}$ near the origin, a distinction crucial for energy integrability when $\nu < 1/2$. This convention, employed by Stratton~\cite{Stratton1941}, differs from formulations that define the Hertz vector as $\hat{r}\Pi$ without the additional factor of $r$.

For a separated solution of the form
\begin{equation}
\Pi(r, \theta, \phi) = A \cdot j_\nu(kr) \cdot \Theta_\nu^m(\theta) \cdot e^{im\phi},
\label{eq:separated_potential}
\end{equation}
the curl operations in equations~\eqref{eq:E_Debye} and~\eqref{eq:H_Debye} may be evaluated explicitly. The radial function $j_\nu(kr)$ is the spherical Bessel function of order $\nu$, regular at the origin, while $\Theta_\nu^m(\theta)$ denotes a solution of the associated Legendre equation with the appropriate regularity properties. For the sectoral case $\nu = m$, we have established that $\Theta_m^m(\theta) = \sin^m\theta$ provides the unique globally regular solution.

\subsection{Complete Field Components}

The six components of the electromagnetic field for TM modes, derived from the electric Debye potential alone, are as follows. Setting $\Pi_m = 0$ and $\Pi_e$ as in equation~\eqref{eq:separated_potential}:
\begin{align}
E_r &= \frac{\nu(\nu+1)}{r} A \, j_\nu(kr) \, \Theta_\nu^m(\theta) \, e^{im\phi}, \label{eq:Er_TM} \\[6pt]
E_\theta &= \frac{A}{r} \frac{d}{dr}[r j_\nu(kr)] \frac{d\Theta_\nu^m}{d\theta} e^{im\phi}, \label{eq:Etheta_TM} \\[6pt]
E_\phi &= \frac{imA}{r\sin\theta} \frac{d}{dr}[r j_\nu(kr)] \, \Theta_\nu^m(\theta) \, e^{im\phi}, \label{eq:Ephi_TM} \\[6pt]
H_r &= 0, \label{eq:Hr_TM} \\[6pt]
H_\theta &= -\frac{i\omega\varepsilon m A}{\sin\theta} j_\nu(kr) \, \Theta_\nu^m(\theta) \, e^{im\phi}, \label{eq:Htheta_TM} \\[6pt]
H_\phi &= -i\omega\varepsilon A \, j_\nu(kr) \frac{d\Theta_\nu^m}{d\theta} e^{im\phi}. \label{eq:Hphi_TM}
\end{align}

For TE modes, setting $\Pi_e = 0$ and $\Pi_m$ as in equation~\eqref{eq:separated_potential}:
\begin{align}
E_r &= 0, \label{eq:Er_TE} \\[6pt]
E_\theta &= \frac{i\omega\mu m A}{\sin\theta} j_\nu(kr) \, \Theta_\nu^m(\theta) \, e^{im\phi}, \label{eq:Etheta_TE} \\[6pt]
E_\phi &= i\omega\mu A \, j_\nu(kr) \frac{d\Theta_\nu^m}{d\theta} e^{im\phi}, \label{eq:Ephi_TE} \\[6pt]
H_r &= \frac{\nu(\nu+1)}{r} A \, j_\nu(kr) \, \Theta_\nu^m(\theta) \, e^{im\phi}, \label{eq:Hr_TE} \\[6pt]
H_\theta &= \frac{A}{r} \frac{d}{dr}[r j_\nu(kr)] \frac{d\Theta_\nu^m}{d\theta} e^{im\phi}, \label{eq:Htheta_TE} \\[6pt]
H_\phi &= \frac{imA}{r\sin\theta} \frac{d}{dr}[r j_\nu(kr)] \, \Theta_\nu^m(\theta) \, e^{im\phi}. \label{eq:Hphi_TE}
\end{align}

Several structural features of these expressions merit comment. The radial field components $E_r$ (for TM) and $H_r$ (for TE) contain the factor $\nu(\nu+1)$, which vanishes at $\nu = 0$. The $1/r$ dependence in the radial components, combined with the $r^\nu$ behavior of $j_\nu(kr)$ near the origin, yields fields that scale as $r^{\nu-1}$---integrable in the energy norm for all $\nu > -1/2$. The tangential components involve either $j_\nu(kr)$ or its radial derivative $[rj_\nu(kr)]'/r$, with the ratio of these functions determining the wave impedance. The azimuthal dependence enters through the factor $e^{im\phi}$ and, for the $\theta$-components, through the additional factor $m/\sin\theta$ that produces a divergence at the poles unless $m = 0$ or the angular function vanishes sufficiently rapidly there.

\subsection{The Null Field at $(\nu, m) = (0, 0)$}
\label{sec:null}
A question that arises naturally from the spectral structure established in Part~\ref{math} is why the point $(\nu, m) = (0, 0)$---which lies at the boundary of the sectoral line $\nu = m$ as $m \to 0^+$---does not correspond to a physical electromagnetic mode. The answer lies in the structure of the Debye potential representation and the operators that extract fields from potentials.

At $\nu = 0$ and $m = 0$, the angular eigenfunction is $\Theta_0^0(\theta) = P_0(\cos\theta) = 1$, a constant, and the azimuthal factor is $e^{i \cdot 0 \cdot \phi} = 1$, also constant. The separated potential reduces to
\begin{equation}
\Pi = A \cdot j_0(kr) = A \cdot \frac{\sin(kr)}{kr},
\end{equation}
depending only on the radial coordinate. Crucially, this potential is \emph{non-trivial}: it is a well-defined, regular solution to the scalar Helmholtz equation $(\nabla^2 + k^2)\Pi = 0$ throughout the cavity volume.

However, the electromagnetic fields generated by this potential vanish identically. For \textbf{TM modes} with $\nu = 0$, $m = 0$, we evaluate each field component from Eqs.~\eqref{eq:Er_TM}--\eqref{eq:Hphi_TM}:
\begin{align}
E_r &= \frac{\nu(\nu+1)}{r}\,A\,j_0(kr)\cdot 1 \cdot 1 = \frac{0}{r}\,A\,j_0(kr) = 0, \\[4pt]
E_\theta &= \frac{A}{r}\,[r j_0(kr)]'\,\frac{d(1)}{d\theta} = \frac{A}{r}\,[r j_0(kr)]'\cdot 0 = 0, \\[4pt]
E_\phi &= \frac{i\,0\,A}{r\sin\theta}\,[r j_0(kr)]'\cdot 1 = 0, \\[4pt]
H_\theta &= -\frac{i\omega\varepsilon\,0\,A}{\sin\theta}\,j_0(kr)\cdot 1 = 0, \\[4pt]
H_\phi &= -i\omega\varepsilon A\, j_0(kr)\,\frac{d(1)}{d\theta} = -i\omega\varepsilon A\, j_0(kr)\cdot 0 = 0.
\end{align}
Every component of the TM field vanishes identically. The potential $\Pi_e = A\,j_0(kr)$ exists but generates no electromagnetic field.

For \textbf{TE modes} with $\nu = 0$, $m = 0$, the situation is identical:
\begin{align}
E_\theta &= \frac{i\omega\mu \cdot 0 \cdot A}{\sin\theta} j_0(kr) \cdot 1 = 0, \\[4pt]
E_\phi &= i\omega\mu A \, j_0(kr) \cdot \frac{d(1)}{d\theta} = 0, \\[4pt]
H_r &= \frac{0 \cdot 1}{r} A \, j_0(kr) = 0, \\[4pt]
H_\theta &= \frac{A}{r} [r j_0(kr)]' \cdot 0 = 0, \\[4pt]
H_\phi &= \frac{i \cdot 0 \cdot A}{r\sin\theta} [r j_0(kr)]' \cdot 1 = 0.
\end{align}
Again, all field components vanish. 

The mathematical origin of this null result is clear from the structure of the Debye representation. The operator $\mathcal{L}: \Pi \mapsto \nabla \times \nabla \times (\mathbf{r}\Pi)$ that extracts electric fields from the Debye potential has a non-trivial \emph{kernel}: all functions $\Pi(r)$ that depend only on the radial coordinate. At $(0,0)$, the potential $\Pi = j_0(kr)$ lies precisely in this kernel. More explicitly, when both $\nu = 0$ and $m = 0$:
\begin{enumerate}
\item The factor $\nu(\nu+1) = 0$ eliminates the radial field components $E_r$ and $H_r$.
\item The azimuthal derivative $\partial/\partial\phi$ acting on $e^{im\phi}$ produces $im \cdot e^{im\phi} = 0$.
\item The polar derivative $d\Theta/d\theta$ acting on the constant function $\Theta = 1$ gives zero.
\end{enumerate}
No mechanism remains to generate nonzero fields from the scalar potential.

This result reflects physical reality: Maxwell's equations in free space admit no nontrivial spherically symmetric, time-harmonic electromagnetic field. A spherically symmetric electric field requires a point charge at the origin (electrostatics), while a spherically symmetric magnetic field is forbidden by $\nabla \cdot \mathbf{B} = 0$. The vanishing of the Debye-generated fields at $\nu = m = 0$ is thus physically necessary.

\begin{remark}[Potential versus field]
The distinction between the potential and the field at $(0,0)$ merits emphasis. The Debye potential $\Pi = j_0(kr)$ is a well-defined, non-singular solution to the scalar wave equation. It is the electromagnetic field $(\mathbf{E}, \mathbf{H})$ extracted from this potential that vanishes---a consequence of the differential structure of the curl operations, not of any pathology in the potential itself. This situation is analogous to pure gauge configurations in gauge field theory, where the vector potential $\mathbf{A}$ may be non-zero while the field strength $\mathbf{F} = d\mathbf{A}$ vanishes. The physical and mathematical interpretation of such configurations---particularly regarding mode counting in cavity quantization---has been the subject of ongoing discussion in the literature; see, e.g., Refs.~\cite{AharonovBohm1959, Berry1984}. We do not pursue this question further here.
\end{remark}

%The point $(0, 0)$ in the $(\nu, m)$ plane represents the boundary of the continuous family of sectoral modes but does not itself correspond to a propagating electromagnetic mode. As $m \to 0^+$ along the sectoral line, the angular function $\sin^m\theta \to 1$, but the electromagnetic field content vanishes in this limit. The first physical mode on the sectoral branch occurs at arbitrarily small positive $m$, not at $m = 0$.

\subsection{Boundary Conditions and Dispersion Relations}

The boundary condition at a perfectly conducting spherical surface of radius $a$ requires the tangential components of the electric field to vanish. For TM modes, examination of equations~\eqref{eq:Etheta_TM} and~\eqref{eq:Ephi_TM} shows that both conditions reduce to the single requirement
\begin{equation}
\left.\frac{d}{dr}[r j_\nu(kr)]\right|_{r=a} = 0.
\label{eq:BC_TM}
\end{equation}
Denoting by $x'_{\nu n}$ the $n$th positive root of the equation $[xj_\nu(x)]' = 0$, the resonant wavenumbers are $k_{\nu n} = x'_{\nu n}/a$, yielding frequencies
\begin{equation}
f_{\nu n}^{\text{TM}} = \frac{c \, x'_{\nu n}}{2\pi a}.
\label{eq:freq_TM}
\end{equation}

For TE modes, equations~\eqref{eq:Etheta_TE} and~\eqref{eq:Ephi_TE} both require
\begin{equation}
j_\nu(ka) = 0.
\label{eq:BC_TE}
\end{equation}
With $x_{\nu n}$ denoting the $n$th positive root of $j_\nu(x) = 0$, the resonant frequencies are
\begin{equation}
f_{\nu n}^{\text{TE}} = \frac{c \, x_{\nu n}}{2\pi a}.
\label{eq:freq_TE}
\end{equation}

The dispersion relations~\eqref{eq:freq_TM} and~\eqref{eq:freq_TE} express the resonant frequency as a function of the angular index $\nu$ for each radial mode number $n$. When $\nu$ is restricted to non-negative integers, these reduce to the standard textbook formulas. The continuous extension to non-integer $\nu$---valid on the sectoral line, or with modified boundary conditions for other mode families---reveals the full dispersion structure underlying the discrete resonances.

\subsection{Asymptotic Dispersion Formulas}

For large angular index $\nu$, the zeros of spherical Bessel functions admit asymptotic expansions based on Airy function theory~\cite{Abramowitz1964, DLMF}. The McMahon expansion for the $n$th zero of $j_\nu(x)$ takes the form
\begin{equation}
x_{\nu,n} \approx \nu + a_n \nu^{1/3} + \frac{3a_n^2}{20}\nu^{-1/3} + O(\nu^{-1}),
\label{eq:bessel_zero_asymptotic}
\end{equation}
where $a_n$ is related to the $n$th zero of the Airy function $\mathrm{Ai}(-a_n) = 0$. For the first three zeros: $a_1 \approx 1.8558$, $a_2 \approx 3.2446$, $a_3 \approx 4.3817$.

For TM modes, the boundary condition involves the derivative $[xj_\nu(x)]' = 0$, leading to a different asymptotic expansion:
\begin{equation}
x'_{\nu,n} \approx \nu + a'_n \nu^{1/3} + O(\nu^{-1/3}),
\label{eq:bessel_deriv_zero_asymptotic}
\end{equation}
where $a'_n$ is related to zeros of the Airy function derivative $\mathrm{Ai}'(-a'_n) = 0$. For the first zero: $a'_1 \approx 0.8086$.

\subsubsection{TE Dispersion Formula}

For TE modes on the sectoral line $\nu = m$, the boundary condition $j_m(ka) = 0$ combined 
with equation~\eqref{eq:bessel_zero_asymptotic} yields
\begin{equation}
ka = x_{m,1} \approx m + 1.856 \, m^{1/3}.
\end{equation}
Inverting this relation gives the azimuthal index as a function of frequency:
\begin{equation}
\boxed{m_{\text{TE}}(\omega) \approx \frac{\omega a}{c} - 1.856 \left(\frac{\omega a}{c}\right)^{1/3}}
\label{eq:dispersion_m_omega_TE}
\end{equation}
or equivalently, expressing frequency as a function of mode index:
\begin{equation}
\omega_{\text{TE}}(m) \approx \frac{c}{a}\left(m + 1.856 \, m^{1/3}\right).
\label{eq:dispersion_omega_m_TE}
\end{equation}

\subsubsection{TM Dispersion Formula}

For TM modes, the boundary condition $[xj_m(x)]'|_{x=ka} = 0$ yields
\begin{equation}
ka = x'_{m,1} \approx m + 0.809 \, m^{1/3}.
\end{equation}
The inverted dispersion relation is:
\begin{equation}
\boxed{m_{\text{TM}}(\omega) \approx \frac{\omega a}{c} - 0.809 \left(\frac{\omega a}{c}\right)^{1/3}}
\label{eq:dispersion_m_omega_TM}
\end{equation}
or equivalently:
\begin{equation}
\omega_{\text{TM}}(m) \approx \frac{c}{a}\left(m + 0.809 \, m^{1/3}\right).
\label{eq:dispersion_omega_m_TM}
\end{equation}

\subsubsection{Physical Interpretation of the Asymptotic Coefficients}

These formulas are the spherical cavity analogues of whispering gallery mode dispersion 
relations. The dominant term $\omega a/c$ represents a ray-optics picture in which the 
wave circulates around the equator at the cavity surface. The negative correction 
proportional to $(\omega a/c)^{1/3}$ arises from the wave nature of the field: the 
transition between the oscillatory region near the cavity wall and the evanescent 
region toward the center is governed by Airy functions, whose characteristic length 
scale introduces the fractional power. At high frequencies (large $ka$), modes become 
increasingly confined to the equatorial region near $r = a$, and $m$ approaches $ka$ 
asymptotically.
The different coefficients for TE and TM modes (1.856 vs 0.809) have a precise 
mathematical origin: they derive from the zeros of different Airy functions. The TE 
boundary condition $j_\nu(ka) = 0$ connects asymptotically to zeros of $\mathrm{Ai}(-z)$, 
with the first zero at $z_1 \approx 2.338$, yielding the coefficient $a_1 \approx 1.856$ 
after the appropriate transformation. The TM boundary condition $[xj_\nu(x)]' = 0$ 
connects instead to zeros of the Airy derivative $\mathrm{Ai}'(-z)$, with the first zero 
at $z'_1 \approx 1.019$, yielding the smaller coefficient $a'_1 \approx 0.809$.
The physical consequence is that TM modes have lower frequencies than TE modes for the 
same angular index $m$: the TM dispersion curve lies below the TE curve in the 
$(m, \omega)$ plane. Equivalently, for a given frequency, the TM mode has a larger 
effective angular index.

Table~\ref{tab:dispersion_data} presents computed values of the universal roots. The table spans from $\nu=0$ to $\nu=3$, covering both integer and non-integer values accessible through boundary modifications. 
\begin{table}[h]
\centering
\caption{Universal roots and resonant frequencies for a spherical cavity of radius $a = 15$ mm. The roots $x_{\nu,1}$ (TE) and $x'_{\nu,1}$ (TM) depend only on $\nu$; frequencies scale as $f = cx/(2\pi a)$.}
\label{tab:dispersion_data}
\begin{tabular}{|c|c|c|c|c|}
\hline
$\nu$ & $x_{\nu,1}$ (TE) & $f_{\text{TE}}$ [GHz] & $x'_{\nu,1}$ (TM) & $f_{\text{TM}}$ [GHz] \\
\hline
$0$ & $\pi$ & 10.00 & $\pi/2$ & 5.00 \\
$0.50$ & 3.832 & 12.20 & 2.166 & 6.89 \\
$1$ & 4.493 & 14.30 & 2.744 & 8.73 \\
$1.50$ & 5.136 & 16.35 & 3.311 & 10.54 \\
$2$ & 5.763 & 18.35 & 3.870 & 12.32 \\
$2.50$ & 6.380 & 20.31 & 4.424 & 14.08 \\
$3$ & 6.988 & 22.24 & 4.973 & 15.83 \\
\hline
\end{tabular}
\end{table}
Several features of the data merit comment. At $\nu = 0$, the roots take the exact values $x_{0,1} = \pi$ (TE) and $x'_{0,1} = \pi/2$ (TM), corresponding to the $\ell = 0$ modes of a full sphere. As $\nu$ increases, both roots grow approximately linearly with a fractional-power correction as predicted by equations~\eqref{eq:dispersion_m_omega_TE} and~\eqref{eq:dispersion_m_omega_TM}. 

\begin{figure}[t]
\centering
   \includegraphics[width=\linewidth]{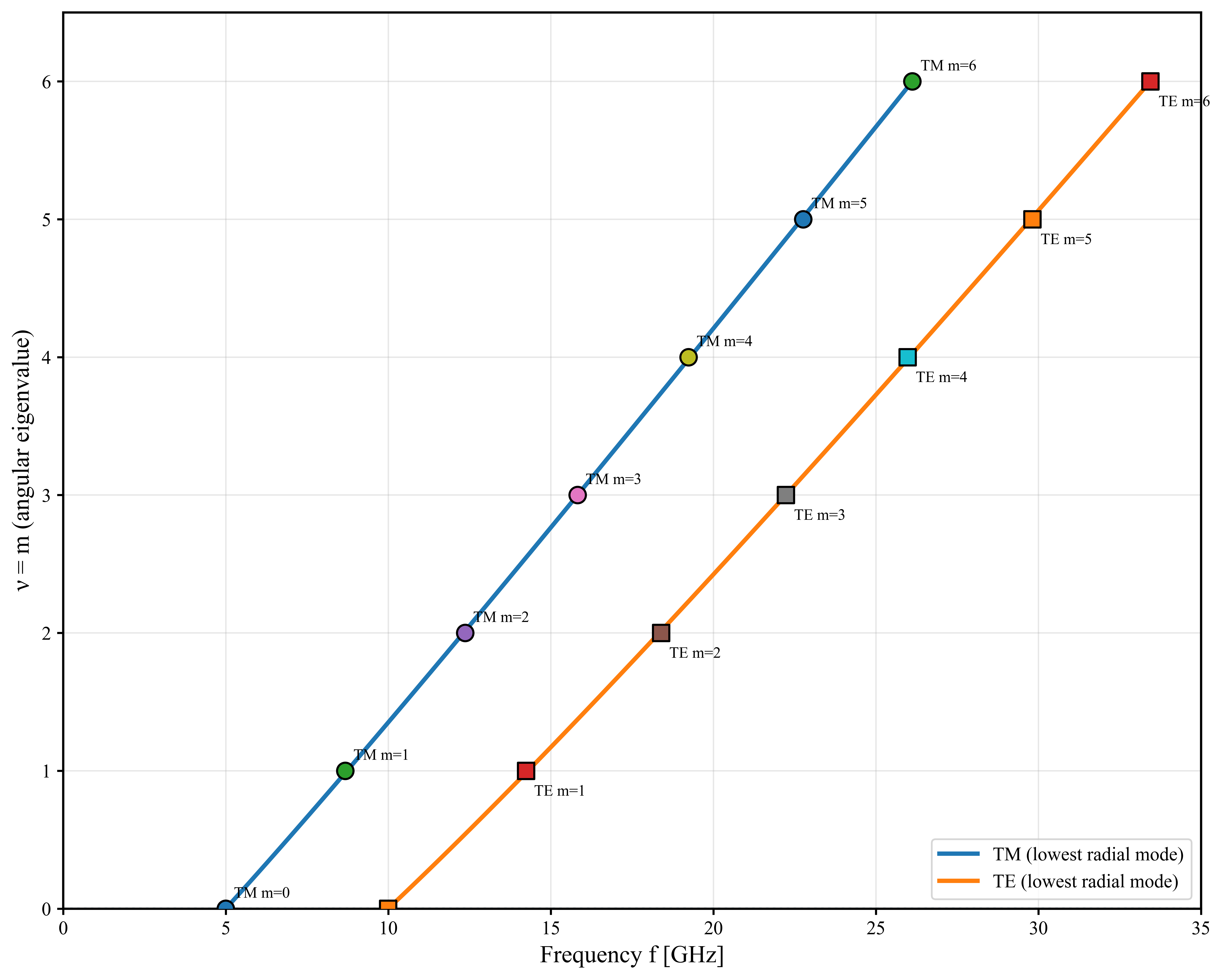}
\caption{Dispersion curves showing the first TE and TM roots as functions of $\nu$. The TE boundary condition $j_\nu(ka) = 0$ (blue) and TM boundary condition $[xj_\nu(x)]' = 0$ (orange) define continuous curves. Filled circles mark integer values corresponding to standard spherical cavity resonances.}
\label{fig:dispersion}
\end{figure}

For non-integer $\nu$ (realized through conical boundaries or on the sectoral branch with wedges), the resonant frequencies interpolate smoothly between the integer-$\nu$ values. This continuous variation validates the interpretation of $\nu$ as a continuous dispersion parameter, with integer values representing resonance conditions on the full sphere. Figure~\ref{fig:dispersion} presents the dispersion curves graphically, showing both TE and TM branches as continuous functions of the angular index $\nu$.

\subsection{Wave Impedances}

The concept of wave impedance, so fruitful in the analysis of waveguide propagation, extends naturally to azimuthally propagating modes in spherical cavities. For waves propagating in the $\phi$-direction, the relevant impedance is the ratio of appropriate transverse field components.

For TE modes, considering the field components transverse to the direction of azimuthal propagation, we define
\begin{equation}
Z_{\text{TE}} = \frac{E_\theta}{H_r} = {\frac{i\omega\mu m r}{\nu(\nu+1)\sin\theta}}.
\label{eq:Z_TE}
\end{equation}
This expression diverges as $\nu(\nu+1) \to 0$, reflecting the vanishing of $H_r$ at the boundaries of the physical parameter range. For the zonal case $m = 0$, an alternative definition using the ratio $E_\phi/H_\theta$ proves more useful:
\begin{equation}
Z_{\text{TE}}^{(m=0)} = \frac{E_\phi}{H_\theta} = i\omega\mu r \cdot \frac{j_\nu(kr)}{[rj_\nu(kr)]'/r}.
\label{eq:Z_TE_zonal}
\end{equation}

For TM modes, the corresponding impedance is
\begin{equation}
Z_{\text{TM}} = \frac{E_r}{H_\theta} = {-\frac{\nu(\nu+1)}{i\omega\varepsilon m r \sin\theta}},
\label{eq:Z_TM}
\end{equation}
with the zonal alternative
\begin{equation}
Z_{\text{TM}}^{(m=0)} = \frac{E_\theta}{H_\phi} = \frac{[rj_\nu(kr)]'/r}{i\omega\varepsilon j_\nu(kr)}.
\label{eq:Z_TM_zonal}
\end{equation}

A fundamental duality connects the TE and TM impedances. From equations~\eqref{eq:Z_TE} and~\eqref{eq:Z_TM}, one verifies directly that
\begin{equation}
Z_{\text{TE}} \cdot Z_{\text{TM}} = -\eta^2,
\label{eq:impedance_duality}
\end{equation}
where $\eta = \sqrt{\mu/\varepsilon}$ is the intrinsic impedance of the medium. This relation, familiar from waveguide theory~\cite{Collin1991, Pozar2012}, expresses the electromagnetic duality between TE and TM polarizations and provides a useful check on the field expressions.

\section{Boundary Modifications and Self-Adjoint Extensions}
\label{boundary}
\subsection{The Role of Domain Geometry}
The spectral structure established depends critically on the assumption that the angular domain encompasses the full sphere, with regularity demanded at both poles. Physical modifications to the cavity geometry---the insertion of conducting cones or wedges---alter the domain on which the angular equation is posed and thereby change the spectral properties of the electromagnetic modes. The mathematical framework for understanding these changes is the theory of self-adjoint extensions, which describes how boundary conditions at singular points of a differential operator determine its spectral characteristics.

The angular Laplacian appearing in the associated Legendre equation possesses regular singular points at $\theta = 0$ and $\theta = \pi$. On the full interval $(0, \pi)$, demanding regularity at both endpoints constitutes a choice of self-adjoint extension that yields the discrete spectrum familiar from standard treatments. Alternative choices, corresponding to different physical boundary conditions, select different self-adjoint extensions with qualitatively different spectra. This perspective, developed rigorously in the monographs of Reed and Simon~\cite{ReedSimon1975} and Zettl~\cite{Zettl2005}, reveals that spectral discreteness is not an intrinsic property of the differential equation but rather a consequence of the boundary conditions imposed. Figure~\ref{fig:geometries} illustrates the various cavity geometries considered in this section and their effects on the spectral parameters.

\begin{figure}[htbp]
\centering

% --- TikZ styles: black/gray only, arXiv-safe ---
\begin{tikzpicture}[
  scale=0.6,
  sphere/.style={fill=gray!35, draw=black, thick},
  cut/.style={draw=black, very thick},
  cone/.style={fill=black, opacity=0.25},
  hatched/.style={pattern=north east lines, pattern color=gray!60},
  axis/.style={thick, ->},
  eq/.style={gray, dashed},
]

%========================
% Row 1: (a) and (b)
%========================
% (a) Full sphere - side view
\begin{scope}[shift={(0,0)}]
  \path[sphere] (0,0) circle (2);
  \draw[eq] (-2,0) -- (2,0);

  \draw[axis] (0,-2.7) -- (0,2.9) node[above] {\large $z$};

  \fill (0,2) circle (3pt) node[above right] {N};
  \fill (0,-2) circle (3pt) node[below right] {S};

  \draw[axis] (0,0) -- (2.5,0) node[right] {$\rho$};

  \node[below] at (0,-3.3) {\large\textbf{(a) Full sphere}};
  \node[below] at (0,-4.0) {\large $\nu, m \in \mathbb{Z}$};
\end{scope}

% (b) Azimuthal wedge - top view
\begin{scope}[shift={(7,0)}]
  \draw[thick, dashed, gray] (0,0) circle (2);

  % Remaining portion (270 degrees)
  \fill[gray!35] (0,0) -- (2,0) arc (0:270:2) -- cycle;
  \draw[thick] (2,0) arc (0:270:2);

  % Cut faces
  \draw[cut] (0,0) -- (2,0);
  \draw[cut] (0,0) -- (0,-2);

  % Angle arc
  \draw[thick, <->] (1.0,0) arc (0:270:1.0);
  \node at (-0.7,0.7) {\Large $\Phi$};

  % Removed wedge (hatched)
  \fill[hatched] (0,0) -- (2,0) arc (0:-90:2) -- cycle;
  \node[gray] at (1.0,-0.7) {\scriptsize removed};

  \fill (0,0) circle (2pt);
  \draw[axis] (0,0) -- (2.7,0) node[right] {$x$};
  \draw[axis] (0,0) -- (0,2.7) node[above] {$y$};
  \node at (0.25,0.25) {$\odot z$};

  \node[below] at (0,-3.3) {\large\textbf{(b) Azimuthal wedge}};
  \node[below] at (0,-4.0) {\large $m = n\pi/\Phi$};
\end{scope}

\end{tikzpicture}

\vspace{0.8cm}

%========================
% Row 2: (c) and (d)
%========================
\begin{tikzpicture}[
  scale=0.6,
  sphere/.style={fill=gray!35, draw=black, thick},
  cone/.style={fill=black, opacity=0.25},
  axis/.style={thick, ->},
  eq/.style={gray, dashed},
]

% (c) Single cone
\begin{scope}[shift={(0,0)}]
  \path[sphere] (0,0) circle (2);

  \begin{scope}
    \clip (0,0) circle (2);
    \fill[cone] (0,0) -- (-0.9,3) -- (0.9,3) -- cycle;
  \end{scope}

  \draw[thick] (0,0) -- (-0.6,2);
  \draw[thick] (0,0) -- (0.6,2);
  \draw[thick] (-0.6,2) arc (108:72:2);

  \draw[thick, <->] (0,1.2) arc (90:72:1.2);
  \node at (0.45,1.4) {\large $\theta_c$};

  \draw[eq] (-2,0) -- (2,0);
  \draw[axis] (0,-2.7) -- (0,2.9) node[above] {\large $z$};

  \fill (0,-2) circle (3pt) node[below right] {S};

  \node[below] at (0,-3.3) {\large\textbf{(c) Single cone}};
  \node[below] at (0,-4.0) {\large $\nu = \nu(\theta_c)$};
\end{scope}

% (d) Biconical
\begin{scope}[shift={(7,0)}]
  \path[sphere] (0,0) circle (2);

  \begin{scope}
    \clip (0,0) circle (2);
    \fill[cone] (0,0) -- (-0.7,3) -- (0.7,3) -- cycle;
    \fill[cone] (0,0) -- (-0.7,-3) -- (0.7,-3) -- cycle;
  \end{scope}

  \draw[thick] (0,0) -- (-0.47,2);
  \draw[thick] (0,0) -- (0.47,2);
  \draw[thick] (-0.47,2) arc (103:77:2);

  \draw[thick] (0,0) -- (-0.47,-2);
  \draw[thick] (0,0) -- (0.47,-2);
  \draw[thick] (-0.47,-2) arc (257:283:2);

  \draw[eq] (-2,0) -- (2,0);
  \draw[axis] (0,-2.7) -- (0,2.9) node[above] {\large $z$};

  \node[below] at (0,-3.3) {\large\textbf{(d) Biconical}};
  \node[below] at (0,-4.0) {\large (double cone)};
\end{scope}

\end{tikzpicture}

\vspace{0.8cm}

%========================
% Row 3: (e) and (f)
%========================
\begin{tikzpicture}[
  scale=0.6,
  axis/.style={thick, ->},
  eq/.style={gray, dashed},
  cut/.style={draw=black, very thick},
  cone/.style={fill=black, opacity=0.25},
  hatched/.style={pattern=north east lines, pattern color=gray!60},
]

% (e) Cone + wedge
\begin{scope}[shift={(0,0)}]
  % main visible hemisphere-like cross-section
  \fill[gray!35] (0,2) arc(90:270:2) -- cycle;
  \draw[thick] (0,2) arc(90:270:2);

  % cut face
  \fill[gray!15] (0,-2) -- (0,2) -- (2,0) -- cycle;
  \draw[cut] (0,-2) -- (0,2);

  % cone (clipped)
  \begin{scope}
    \clip (0,2) arc(90:270:2) -- cycle;
    \fill[cone] (0,0) -- (-0.6,3) -- (0,3) -- cycle;
  \end{scope}
  \draw[thick] (0,0) -- (-0.4,2);

  % removed region hatch
  \fill[hatched] (0,2) arc(90:0:2) -- (0,0) -- cycle;

  \draw[eq] (-2,0) -- (0,0);

  \draw[axis] (0,-2.7) -- (0,2.9) node[above] {\large $z$};
  \draw[axis] (0,0) -- (2.5,0) node[right] {$x$};

  \node[below] at (0,-3.3) {\large\textbf{(e) Cone + wedge}};
  \node[below] at (0,-4.0) {\large $\nu(\theta_c),\; m = n\pi/\Phi$};
\end{scope}

% (f) Bicone + wedge
\begin{scope}[shift={(7,0)}]
  \fill[gray!35] (0,2) arc(90:270:2) -- cycle;
  \draw[thick] (0,2) arc(90:270:2);

  \fill[gray!15] (0,-2) -- (0,2) -- (2,0) -- cycle;
  \draw[cut] (0,-2) -- (0,2);

  \begin{scope}
    \clip (0,2) arc(90:270:2) -- cycle;
    \fill[cone] (0,0) -- (-0.5,3) -- (0,3) -- cycle;
    \fill[cone] (0,0) -- (-0.5,-3) -- (0,-3) -- cycle;
  \end{scope}
  \draw[thick] (0,0) -- (-0.35,2);
  \draw[thick] (0,0) -- (-0.35,-2);

  \fill[hatched] (0,2) arc(90:0:2) -- (0,0) -- cycle;

  \draw[eq] (-2,0) -- (0,0);

  \draw[axis] (0,-2.7) -- (0,2.9) node[above] {\large $z$};
  \draw[axis] (0,0) -- (2.5,0) node[right] {$x$};

  \node[below] at (0,-3.3) {\large\textbf{(f) Bicone + wedge}};
  \node[below] at (0,-4.0) {\large Full $(\nu, m)$ access};
\end{scope}

\end{tikzpicture}

\caption{Spherical cavity geometries and their spectral characteristics, shown as cross-sections.
(a) Full sphere (meridional view): both $\nu$ and $m$ restricted to integers.
(b) Azimuthal wedge (equatorial view from above): removing the angular region $\phi > \Phi$ permits non-integer $m = n\pi/\Phi$; hatched region shows removed portion.
(c) Single cone at half-angle $\theta_c$: removes the polar cap and allows continuous $\nu(\theta_c) > 0$.
(d) Double cone (biconical): symmetric cones at both poles.
(e) Cone with wedge: independent control of $\nu$ via cone angle and $m$ via wedge angle; hatched shows removed azimuthal section.
(f) Biconical with wedge: full two-parameter access to continuous $(\nu, m)$.}
\label{fig:geometries}
\end{figure}
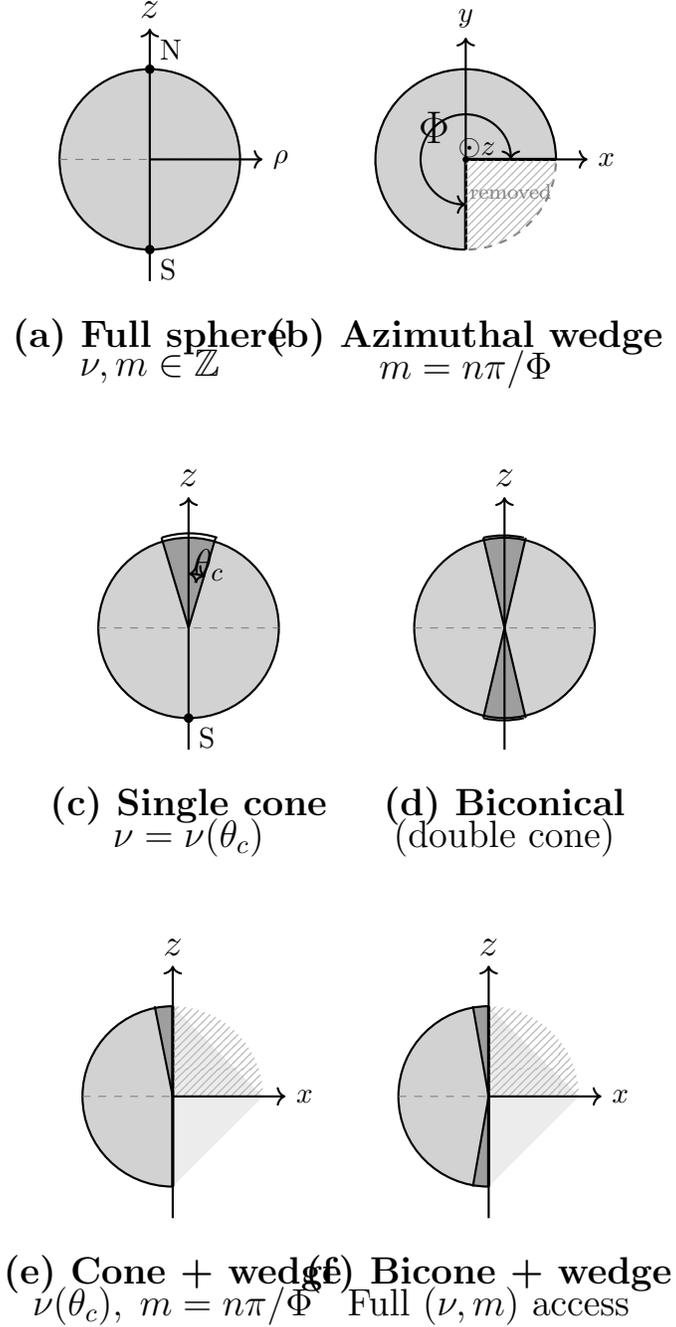

%==============================================================================
\subsection{Conical Boundaries}
%==============================================================================

Consider a perfectly conducting cone of half-angle $\theta_c$, coaxial with the polar axis, inserted into the spherical cavity. The cone removes from the domain all points with $\theta > \theta_c$, replacing the regularity condition at $\theta = \pi$ with a boundary condition at $\theta = \theta_c$. The electromagnetic boundary conditions require vanishing of the tangential electric field at the conical surface.

For TM modes, the condition $E_r(\theta_c) = 0$ requires the angular function to vanish at the cone:
\begin{equation}
\Theta_\nu^m(\theta_c) = 0.
\label{eq:cone_BC_TM}
\end{equation}
This equation, together with the regularity requirement at $\theta = 0$, determines the admissible values of $\nu$ as a function of the cone angle $\theta_c$ and the azimuthal index $m$. For the zonal case $m = 0$, equation~\eqref{eq:cone_BC_TM} becomes $P_\nu(\cos\theta_c) = 0$, the zeros of the Legendre function. As $\theta_c$ varies continuously from $0$ to $\pi$, the roots $\nu$ vary continuously, converting the isolated integer spectrum of the full sphere into a continuous family parameterized by the cone geometry.

For TE modes, the condition $E_\phi(\theta_c) = 0$ requires
\begin{equation}
\left.\frac{d\Theta_\nu^m}{d\theta}\right|_{\theta = \theta_c} = 0.
\label{eq:cone_BC_TE}
\end{equation}
In the zonal case, this condition becomes $P'_{\nu}(\cos\theta_c)=0$, corresponding to extrema of the Legendre function. For the cap-removal geometry studied here, decreasing $\theta_c$ toward zero yields progressively smaller values of $\nu$, approaching the lower end of the continuous branch $0<\nu<1$.

\subsection{Azimuthal Wedges}

A conducting wedge spanning the angular range $0 < \phi < \Phi$, where $\Phi < 2\pi$, restricts the azimuthal domain and modifies the quantization of the index $m$. The boundary conditions $E_\text{tan} = 0$ at $\phi = 0$ and $\phi = \Phi$ select standing-wave solutions in azimuth rather than the traveling-wave form $e^{im\phi}$. The admissible values of the azimuthal index become
\begin{equation}
m = \frac{n\pi}{\Phi}, \quad n = 1, 2, 3, \ldots
\label{eq:wedge_m}
\end{equation}
Since the angular equation~\eqref{eq:angular_ode} depends on $m$ only through $m^2$, solutions for $\pm m$ yield identical angular eigenvalues. We adopt the convention $m > 0$ without loss of generality; the physical fields are real combinations of $e^{\pm im\phi}$, realized in the wedge geometry as standing waves $\sin(m\phi)$ or $\cos(m\phi)$ depending on the boundary conditions at the wedge faces.

For wedge angles that are not rational multiples of $\pi$, the values of $m$ are non-integer. On the sectoral line $\nu = m$, such non-integer values correspond to points on the continuous dispersion curve between the standard integer resonances. The wedge geometry thus provides physical access to the continuous sectoral branch, with the wedge angle serving as a tuning parameter that selects specific points along the dispersion curve.

The combination of a conical boundary at some $\theta_c < \pi$ with an azimuthal wedge of opening $\Phi < 2\pi$ provides access to a two-parameter family of modes labeled by continuous values of both $\nu$ and $m$. This configuration, while geometrically complex, represents the most general physical realization of the continuous spectral structure underlying the spherical cavity modes.

\section{Energy Considerations and Power Flow}

\label{energy}
The physical admissibility of an electromagnetic mode requires that the total field energy within the cavity remain finite. This condition places constraints on the behavior of the fields near singular points of the geometry. Beyond mere admissibility, the spatial distribution of energy and the structure of power flow provide physical insight into the nature of the modes and their connection to practical applications. The time-averaged electromagnetic energy stored in the cavity is
\begin{equation}
U = \frac{1}{4} \int_V \left( \epsilon_0 |\mathrm{E}|^2 + \mu_0 |\mathrm{H}|^2 \right) dV,
\label{eq:total_energy}
\end{equation}
where the factor of $1/4$ (rather than $1/2$) accounts for time-averaging of sinusoidal fields. In spherical coordinates, the volume element is $dV = r^2 \sin\theta \, dr \, d\theta \, d\phi$, and the integral factorizes when the fields admit separated solutions. For a mode with radial dependence $j_\nu(kr)$, angular dependence $P_\nu^m(\cos\theta)$, and azimuthal dependence $e^{im\phi}$, the energy integral takes the form
\begin{equation}
U = U_0 \cdot I_r(\nu) \cdot I_\theta(\nu, m) \cdot I_\phi(m),
\label{eq:energy_factored}
\end{equation}
where $U_0$ contains dimensional factors and $I_r$, $I_\theta$, $I_\phi$ are dimensionless integrals over the respective coordinates.

The relevant integrability criterion is finite electromagnetic energy, not merely square integrability of the angular eigenfunction under the Sturm--Liouville weight $\sin\theta$. Since the energy density involves products of field components, each containing the angular function and potentially factors of $1/\sin\theta$ arising from the spherical coordinate representation, the physical constraint is more restrictive than abstract $L^{2}$ membership.

\subsection{Radial Energy Convergence}
\label{subsec:radial_energy}

The electromagnetic energy stored in a cavity mode must be finite for the mode to be physically realizable. This requirement constrains the admissible values of the angular eigenvalue $\nu$ through the behavior of the fields near the origin.

From the field expressions derived in Section~\ref{EM}, the radial electric field component for TM modes is
\begin{equation}
E_r = \frac{\nu(\nu+1)}{r} A j_\nu(kr) \Theta_\nu^m(\theta) e^{im\phi}.
\label{eq:Er_near_origin}
\end{equation}
Near the origin, the spherical Bessel function admits the asymptotic expansion $j_\nu(kr) \sim (kr)^\nu/(2\nu+1)!!$ for $\nu > -1/2$, yielding
\begin{equation}
E_r \sim \nu(\nu+1) \cdot r^{\nu-1} \quad \text{as } r \to 0.
\label{eq:Er_scaling}
\end{equation}
The tangential components $E_\theta$ and $E_\phi$ involve the combination $[rj_\nu(kr)]'/r$, which behaves as $r^{\nu-1}$ near the origin---the same scaling as the radial component. For TE modes, the magnetic field components exhibit identical radial behavior by electromagnetic duality.

The energy density scales as $|E|^2 \sim r^{2\nu-2}$. Including the spherical volume element $r^2\,dr\,d\Omega$, the radial contribution to the energy integral near the origin becomes
\begin{equation}
\int_0^\epsilon |E|^2 r^2 \, dr \sim \int_0^\epsilon r^{2\nu} \, dr 
= \frac{\epsilon^{2\nu+1}}{2\nu+1}.
\label{eq:radial_energy_integral}
\end{equation}
This integral converges if and only if $2\nu + 1 > 0$, yielding the energy integrability condition
\begin{equation}
\boxed{\nu > -\frac{1}{2}}
\label{eq:nu_bound}
\end{equation}
for both TE and TM modes.

This condition is remarkably permissive: it admits all positive values of $\nu$,
including the fractional values that arise from wedge and cone geometries. For example,
the PEC/PMC wedge with $270^\circ$ opening supports modes with $\nu = m = 1/3$, which
satisfies~\eqref{eq:nu_bound} comfortably. Similarly, shallow cone geometries can produce
eigenvalues approaching zero from above without violating energy integrability.

\begin{remark}[Consistency with numerical simulations]
The bound $\nu > -1/2$ is consistent with all modes observed in HFSS eigenmode simulations of wedge and cone geometries. In particular, the $\nu = m = 1/3$ mode of the PEC/PMC wedge (observed at 6.27~GHz for a 15~mm radius cavity) has finite, well-defined energy despite the field singularity at the wedge edge. The variational formulation employed by finite-element solvers naturally handles such edge singularities through the weak form of Maxwell's equations, which involves field curls rather than the fields themselves.
\end{remark}

\begin{remark}[Spherical Neumann function exclusion]
The radial equation admits two linearly independent solutions: the spherical Bessel function $j_\nu(kr) \sim r^\nu$ and the spherical Neumann function
$y_\nu(kr) \sim r^{-\nu-1}$ as $r \to 0$. The electromagnetic energy near the origin converges only if the radial field singularity is integrable.
For $j_\nu$, this requires $\nu > -\tfrac{1}{2}$; for $y_\nu$, this requires $\nu < -\tfrac{1}{2}$.

Since the eigenvalue $\lambda = \nu(\nu+1)$ is invariant under $\nu \to -1-\nu$, each physical eigenvalue admits two parameter values related by this symmetry. By choosing the representative with $\nu > -\tfrac{1}{2}$, the Bessel function $j_\nu$ is always the finite-energy solution. This justifies both the restriction $\nu > -\tfrac{1}{2}$ and the choice
$R(r) = j_\nu(kr)$ in Section~\ref{math}.

\end{remark}

\subsection{Angular Energy Convergence}
The angular energy integral requires examination near both poles, where the coordinate system becomes singular. We treat the sectoral, tesseral, and zonal cases separately.
\subsubsection{Sectoral Modes ($\nu = m$)}
For sectoral modes, the angular function is $\Theta(\theta) = \sin^m\theta$, which vanishes at both poles as $\theta^m$ (near $\theta = 0$) and $(\pi - \theta)^m$ (near $\theta = \pi$). The angular energy integral becomes
\begin{equation}
I_\theta = \int_0^\pi |\Theta(\theta)|^2 \sin\theta \, d\theta = \int_0^\pi \sin^{2m+1}\theta \, d\theta.
\label{eq:angular_integral_sectoral}
\end{equation}
Near $\theta = 0$, the integrand behaves as $\theta^{2m+1}$, yielding
\begin{equation}
\int_0^\epsilon \theta^{2m+1} \, d\theta = \frac{\epsilon^{2m+2}}{2m+2},
\label{eq:angular_near_pole}
\end{equation}
which converges for all $m > -1$. The analogous integral near $\theta = \pi$ converges under the same condition. For all $m > 0$, the sectoral angular energy integral is therefore manifestly finite. The complete integral can be evaluated in closed form using the beta function:
\begin{equation}
\int_0^\pi \sin^{2m+1}\theta \, d\theta = B\left(m+1, \frac{1}{2}\right) = \frac{\sqrt{\pi} \, \Gamma(m+1)}{\Gamma(m + 3/2)} = \frac{2^{2m+1} (m!)^2}{(2m+1)!},
\label{eq:angular_integral_closed}
\end{equation}
valid for $m > -1$. The final equality holds only for integer $m$; for non-integer $m$, the gamma function expression remains valid.

While the angular function $\sin^m\theta$ itself is well-behaved, certain field components contain factors of $1/\sin\theta$ from the spherical coordinate representation. For TM modes, the azimuthal magnetic field component includes
\begin{equation}
H_\phi \propto \frac{m}{\sin\theta} P_\nu^m(\cos\theta).
\label{eq:H_phi_factor}
\end{equation}
For sectoral modes with $P_m^m(\cos\theta) \propto \sin^m\theta$, this becomes
\begin{equation}
H_\phi \propto \frac{m \sin^m\theta}{\sin\theta} = m \sin^{m-1}\theta.
\label{eq:H_phi_sectoral}
\end{equation}
Near $\theta = 0$, the contribution to the energy density scales as $|H_\phi|^2 \sim m^2 \theta^{2m-2}$. Including the volume element:
\begin{equation}
\int_0^\epsilon m^2 \theta^{2m-2} \cdot \theta \, d\theta = m^2 \int_0^\epsilon \theta^{2m-1} \, d\theta = \frac{m^2 \epsilon^{2m}}{2m} = \frac{m \epsilon^{2m}}{2}.
\label{eq:H_phi_energy}
\end{equation}
This converges for all $m > 0$ and vanishes as $m \to 0^+$, so no additional constraint arises from the $1/\sin\theta$ factors in the field expressions.

\subsubsection{Tesseral Modes ($\nu = m + k$, $k \geq 1$)}
For tesseral modes with $\nu = m + k$ where $k$ is a positive integer, the angular function takes the form
\begin{equation}
\Theta(\theta) = \sin^m\theta \cdot Q_k(\cos\theta),
\label{eq:angular_tesseral}
\end{equation}
where $Q_k$ is a polynomial of degree $k$ arising from the terminating hypergeometric series (see Eq.~\eqref{eq:theta_hypergeometric}). Since $Q_k(\cos\theta)$ is bounded on $[0, \pi]$, the behavior near the poles is dominated by the $\sin^m\theta$ factor, exactly as in the sectoral case. The angular energy integral therefore converges under the same condition $m > 0$.

The field components for tesseral modes contain the same $1/\sin\theta$ factors as sectoral modes. Near the poles:
\begin{equation}
H_\phi \propto \frac{m}{\sin\theta} P_\nu^m(\cos\theta) \sim m \sin^{m-1}\theta \cdot Q_k(\cos\theta),
\end{equation}
which remains integrable for $m > 0$ since $Q_k$ is bounded. Thus tesseral modes satisfy the same energy integrability condition as sectoral modes: $m > 0$.

\subsubsection{Zonal Modes ($m = 0$)}
For zonal modes, the angular function is the Legendre function $P_\nu(\cos\theta)$, which is bounded on $[0, \pi]$ for all $\nu \geq 0$. The field components containing $m/\sin\theta$ factors vanish identically when $m = 0$, eliminating potential singularities For integer $\nu$ on the full sphere, the angular energy integral
\begin{equation}
I_\theta = \int_0^\pi |P_\nu(\cos\theta)|^2 \sin\theta \, d\theta = \frac{2}{2\nu + 1}
\label{eq:angular_integral_zonal}
\end{equation}
follows from the orthogonality of Legendre polynomials and converges for all $\nu > -1/2$, a condition satisfied by all physical zonal modes.

\begin{remark}[Zonal modes with non-integer $\nu$]
For cone geometries where the polar domain is truncated to $\theta \in [\theta_c, \pi]$, zonal modes can have non-integer $\nu$ determined by the boundary condition $P_\nu(\cos\theta_c) = 0$. In this case, Eq.~\eqref{eq:angular_integral_zonal} does not apply directly, but the Legendre function $P_\nu(\cos\theta)$ remains bounded on the truncated domain, ensuring finite energy. The normalization integral must be evaluated numerically for non-integer $\nu$.
\end{remark}

\subsubsection{TM Modes}

For TM modes, $H_r = 0$ by definition, so the Poynting vector simplifies to
\begin{align}
S_r^{\mathrm{TM}} &= \frac{1}{2} \mathrm{Re}\left( E_\theta H_\phi^* - E_\phi H_\theta^* \right), \\
S_\theta^{\mathrm{TM}} &= -\frac{1}{2} \mathrm{Re}\left( E_r H_\phi^* \right), \\
S_\phi^{\mathrm{TM}} &= \frac{1}{2} \mathrm{Re}\left( E_r H_\theta^* \right).
\label{eq:poynting_TM}
\end{align}
Substituting the field expressions from Section~\ref{EM}, the azimuthal power flow for TM modes becomes
\begin{equation}
S_\phi^{\mathrm{TM}} = \frac{1}{2\eta} |E_r|^2 \cdot \frac{1}{Z_{\mathrm{TM}}},
\label{eq:S_phi_TM_explicit}
\end{equation}
where $Z_{\mathrm{TM}}$ is the TM wave impedance derived in Section~\ref{EM}. The azimuthal component $S_\phi$ represents energy circulation around the cavity axis and is nonzero for all modes with $m \neq 0$.

\subsubsection{TE Modes}
For TE modes, $E_r = 0$, yielding
\begin{align}
S_r^{\mathrm{TE}} &= \frac{1}{2} \mathrm{Re}\left( E_\theta H_\phi^* - E_\phi H_\theta^* \right), \\
S_\theta^{\mathrm{TE}} &= \frac{1}{2} \mathrm{Re}\left( E_\phi H_r^* \right), \\
S_\phi^{\mathrm{TE}} &= -\frac{1}{2} \mathrm{Re}\left( E_\theta H_r^* \right).
\label{eq:poynting_TE}
\end{align}
The corresponding expression for azimuthal power flow is
\begin{equation}
S_\phi^{\mathrm{TE}} = \frac{\eta}{2} |H_r|^2 \cdot Z_{\mathrm{TE}},
\label{eq:S_phi_TE_explicit}
\end{equation}
involving the TE wave impedance.

\subsubsection{Azimuthal Power Flow and Wave Impedance}
The total power flowing in the azimuthal direction through a surface of constant $\phi$ is obtained by integrating the Poynting vector:
\begin{equation}
P_\phi = \int_0^a \int_0^\pi S_\phi \, r \, dr \, r \sin\theta \, d\theta = \int_0^a \int_0^\pi S_\phi \, r^2 \sin\theta \, dr \, d\theta.
\label{eq:P_phi_integral}
\end{equation}
For traveling-wave modes (real $m$ permitted by non-axisymmetric geometries), this integral yields a finite, nonzero result proportional to $m$.
%For traveling-wave modes (real $m$), this integral yields a finite, nonzero result proportional to $m$. 
The wave impedances $Z_{\mathrm{TM}}$ and $Z_{\mathrm{TE}}$ characterize the ratio of transverse field components and satisfy the duality relation
\begin{equation}
Z_{\mathrm{TE}} \cdot Z_{\mathrm{TM}} = -\eta^2,
\label{eq:impedance_duality}
\end{equation}
as derived in Section~\ref{EM}. This relation ensures that TE and TM modes with the same angular indices carry equal power for equal field amplitudes.

\subsubsection{Standing vs. Traveling Waves}
For integer $m$, the azimuthal dependence $e^{im\phi}$ can be decomposed into standing waves $\cos(m\phi)$ and $\sin(m\phi)$. Standing-wave modes have $\langle S_\phi \rangle = 0$ when averaged over azimuth, representing purely oscillatory energy exchange rather than net circulation.

For non-integer $m$ realized through wedge geometries, the boundary conditions at the wedge faces impose a specific superposition of $e^{+im\phi}$ and $e^{-im\phi}$ components. The resulting standing-wave pattern is confined to the angular sector $0 < \phi < \Delta\phi$, with the Poynting vector exhibiting nodes at the conducting boundaries.

\subsection{Energy Distribution and Mode Localization}
The spatial distribution of electromagnetic energy provides insight into mode character. For sectoral modes with $\nu = m$, the angular factor $\sin^{2m}\theta$ concentrates the energy toward the equatorial plane as $m$ increases. The energy density at the equator ($\theta = \pi/2$) relative to that at latitude $\theta$ scales as
\begin{equation}
\frac{u(\theta)}{u(\pi/2)} = \sin^{2m}\theta,
\label{eq:energy_latitude}
\end{equation}
which for large $m$ becomes sharply peaked near $\theta = \pi/2$. This equatorial concentration is characteristic of whispering-gallery-type behavior, though in the azimuthal rather than polar direction. Conversely, for small $m \to 0^+$, the angular distribution becomes nearly 
uniform. However, this limit must be interpreted carefully: while the 
normalized angular profile $\sin^{2m}\theta \to 1$ becomes isotropic, the 
overall field amplitude vanishes as $m \to 0^+$, and the electromagnetic 
field is identically zero at $(\nu, m) = (0,0)$ (see Section~\ref{sec:null}). The transition from equatorially concentrated ($m \gg 1$) to uniform ($m \to 0^+$) angular distributions thus occurs along a family of modes whose energy tends to zero in the limit.

\subsection{Connection to Feed Structures}
This distinction is relevant for antenna feed design~\cite{Balanis2016, Chu1948}. Conical antenna structures fed at their apex excite modes with small but nonzero effective $m$, determined by the feed geometry. The near-field pattern of such structures is well described by the sectoral solutions with continuous $m$, providing a theoretical foundation for the design of broadband feeds and transitions~\cite{Schelkunoff1943, Papas1950}.

The limiting case $m \to 0^+$ along the sectoral branch requires careful 
interpretation. While the angular function $\sin^m\theta \to 1$ as $m \to 0^+$, and the Debye potential $\Pi = j_0(kr)$ remains well-defined at $(0,0)$, the electromagnetic field extracted from this potential vanishes identically (see Section~\ref{sec:null}). The approach $m \to 0^+$ thus represents a sequence of propagating modes whose field amplitude tends to zero, rather than a smooth connection to a nontrivial zonal mode. Physical antenna structures necessarily operate at small but finite $m > 0$.

The fundamental limit on antenna size derived by Chu~\cite{Chu1948} and elaborated by Harrington~\cite{Harrington1960} can be understood in terms of the minimum $m$ (and hence minimum $\nu$) supportable by a spherical volume of given radius. Smaller antennas require modes with smaller angular indices, which in turn have lower $Q$ factors and broader bandwidths but also lower radiation efficiency.

\section{Numerical Validation}
\label{validation}
The preceding analysis predicts that electromagnetic resonances in modified spherical cavities are governed by continuous angular indices $(\nu, m)$ rather than the integer quantum numbers $(\ell, m)$ of the full sphere. This section validates this framework through finite-element eigenmode simulations using Ansys HFSS, demonstrating quantitative agreement between theory and simulation across wedge-only, cone-only, and combined configurations.

All simulations use a spherical vacuum cavity of radius $a = 15~\mathrm{mm}$ with perfect electric conductor (PEC) boundary conditions. Azimuthal wedges are implemented by inserting PEC sheets along radial half-planes intersecting the polar axis, restricting the domain to $\phi \in [0, \Delta\phi]$. Polar cones are created by Boolean subtraction: a cone of height $h = a$ and outer radius $r_{\mathrm{out}}$ is removed from the north pole, replacing the polar cap with a conical PEC boundary at half-angle $\theta_c = \arctan(r_{\mathrm{out}}/a)$. No metallic structures remain inside the cavity; all boundaries are part of the PEC cavity wall. Mesh refinement ensures eigenfrequency convergence to better than $0.1\%$.

The validation proceeds by comparing HFSS eigenfrequencies against theoretical predictions computed entirely from the boundary-value problem, with no fitting parameters. For each geometry, we solve the appropriate angular eigenvalue problem to determine $(\nu, m)$, then apply radial quantization to obtain the resonant frequencies.

\subsection{Wedge Geometries: Non-Integer Azimuthal Index}
\label{subsec:wedge_validation}

In this section, we define the \emph{wedge angle} $\alpha$ as the angular extent of the PEC sheet(s) inserted into the cavity, and the \emph{cavity opening} $\Phi = 2\pi - \alpha$ as the remaining azimuthal domain. Thus a ``$90^\circ$ wedge'' refers to a geometry with $\alpha = 90^\circ$ of PEC material removed, leaving a cavity opening of $\Phi = 270^\circ$.

For a cavity opening $\Phi$ with PEC boundaries on both faces, the azimuthal eigenfunctions take the form $\Phi(\phi) = \sin(m\phi)$, where the boundary condition $\Phi(\Phi) = 0$ quantizes the azimuthal index as
\begin{equation}
\label{eq:m_quantization}
m = \frac{n\pi}{\Phi}, \qquad n = 1, 2, 3, \ldots
\end{equation}
This quantization is exact and follows directly from the separable structure of Maxwell's equations in spherical coordinates. For cavity openings that are not submultiples of $\pi$, the values of $m$ are non-integer.

Since the polar domain remains $\theta \in (0, \pi)$, regularity at both poles is still required. For a given $m > 0$ (integer or not), the solution regular at the north pole takes the form
\begin{equation}
\Theta(\theta) = \sin^m\theta \cdot {}_2F_1\left(m-\nu, m+\nu+1; m+1; \sin^2(\theta/2)\right).
\label{eq:theta_hypergeometric}
\end{equation}
As $\theta \to \pi$, the argument $\sin^2(\theta/2) \to 1$, where the hypergeometric function diverges unless the series terminates. Termination occurs when $m - \nu = -k$ for non-negative integer $k$, yielding the eigenvalue condition
\begin{equation}
\nu = m + k, \qquad k = 0, 1, 2, \ldots
\label{eq:nu_spectrum_wedge}
\end{equation}
The case $k = 0$ gives the sectoral mode $\nu = m$; the cases $k \geq 1$ give tesseral modes with non-integer $\nu$ whenever $m$ is non-integer.

\begin{figure}[t]
\centering
\includegraphics[width=\textwidth]{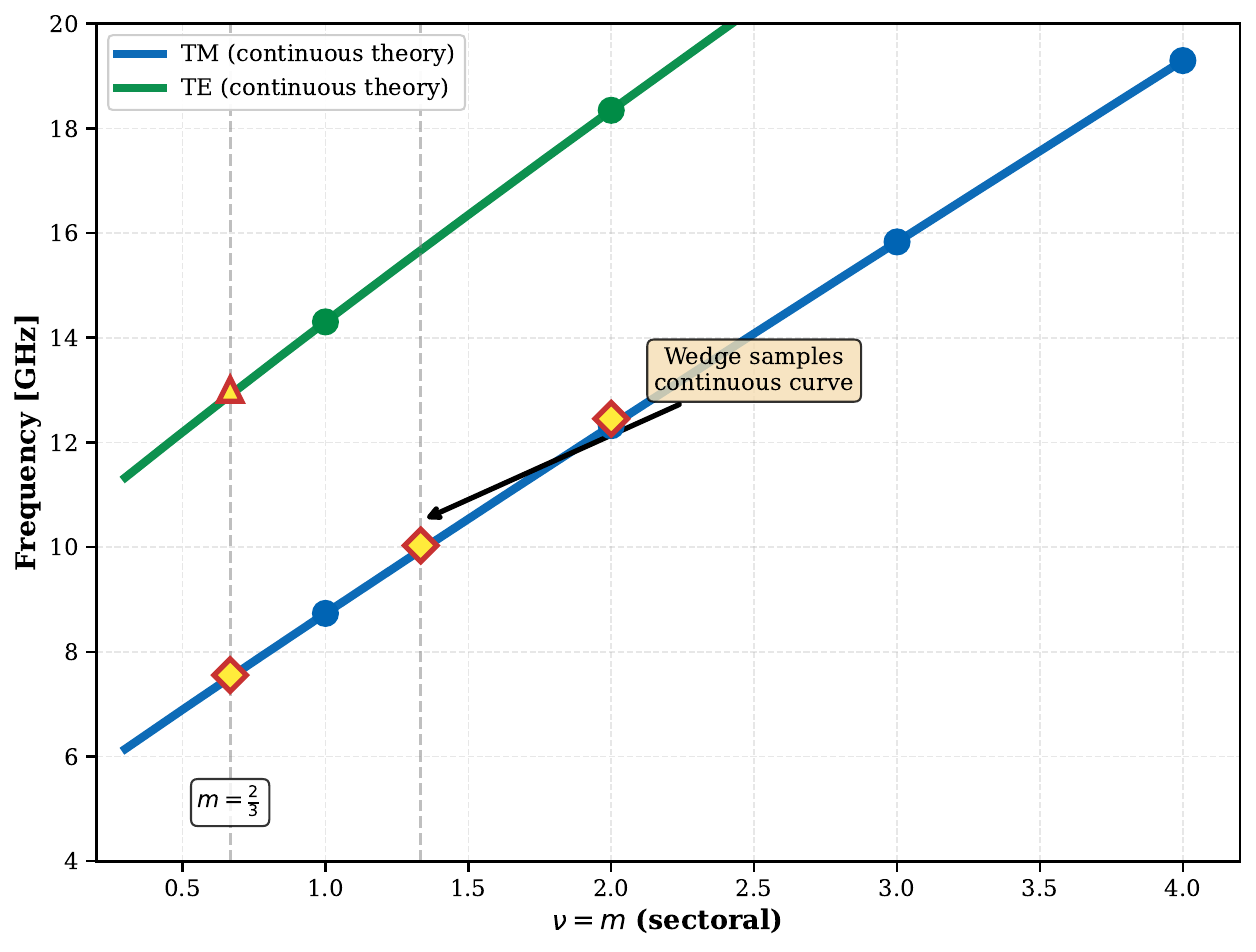}
\caption{Fundamental TM frequency versus blue cavity opening angle $\Phi$.
The minimum near $\Phi = 270^\circ$ (corresponding to a $90^\circ$ wedge) arises because this geometry permits $m = 2/3 < 1$, sampling the continuous dispersion curve below the first integer mode.
The $\Phi = 180^\circ$ geometry (hemisphere) enforces $m \geq 1$, yielding higher frequencies despite smaller volume.
Theory (solid line with circles) computed from Eq.~\eqref{eq:m_quantization} and radial TM quantization; HFSS data (squares) agrees within $1\%$.}
\label{fig:wedge_angle}
\end{figure}

\begin{table}[htbp]
\centering
\caption{Eigenmode spectrum for a spherical cavity with a $90^\circ$ wedge
($\Phi = 270^\circ$ cavity opening, $a = 15~\mathrm{mm}$).
The azimuthal index $m = 2n/3$ follows from Eq.~\eqref{eq:m_quantization};
the angular eigenvalue $\nu = m + k$ for non-negative integer $k$,
per Eq.~\eqref{eq:nu_spectrum_wedge}.
Modes with $k = 0$ are sectoral, while modes with $k \geq 1$ are tesseral.
All modes agree within $1.4\%$.}
\label{tab:wedge_spectrum}

\begin{tabular}{cccccccc}
\toprule
Mode & Type & $\nu$ & $m$ & $k$ &
$f_{\mathrm{theory}}$ [GHz] &
$f_{\mathrm{HFSS}}$ [GHz] &
Error [\%] \\
\midrule
1 & TM & $2/3$ & $2/3$ & 0 & 7.51 & 7.57 & $-0.8$ \\
2 & TM & $4/3$ & $4/3$ & 0 & 9.93 & 10.03 & $-1.0$ \\
3 & TM & $5/3$ & $2/3$ & 1 & 11.14 & 11.25 & $-1.0$ \\
4 & TM & $2$   & $2$   & 0 & 12.32 & 12.44 & $-1.0$ \\
5 & TE & $2/3$ & $2/3$ & 0 & 12.89 & 13.02 & $-1.0$ \\
6 & TM & $7/3$ & $4/3$ & 1 & 13.59 & 13.62 & $-0.2$ \\
\bottomrule
\end{tabular}
\end{table}

The key test case is the $90^\circ$ wedge (cavity opening $\Phi = 3\pi/2 = 270^\circ$), which yields $m = 2n/3$ for $n = 1, 2, 3, \ldots$ The fundamental mode has $m = 2/3$---a value strictly forbidden on the full sphere, where regularity of $e^{im\phi}$ under $\phi \to \phi + 2\pi$ requires integer $m$. For sectoral TM modes with $\nu = m$, the resonant frequency is determined by the first root of $[x j_\nu(x)]' = 0$, giving $x'_{2/3,1} \approx 2.36$ and
\begin{equation}
f_1 = \frac{c \, x'_{2/3,1}}{2\pi a} \approx 7.51~\mathrm{GHz}.
\end{equation}
HFSS returns $f_1 = 7.57~\mathrm{GHz}$, an agreement within $0.8\%$.

This result has notable implications. The $90^\circ$ wedge ($\Phi = 270^\circ$) possesses a \emph{larger} cavity volume than the hemisphere ($\Phi = 180^\circ$), yet exhibits a \emph{lower} fundamental frequency. The hemisphere enforces integer $m = 1, 2, 3, \ldots$, yielding a fundamental TM frequency of $8.73~\mathrm{GHz}$ (theory) versus $8.81~\mathrm{GHz}$ (HFSS)---some 17\% higher than the $90^\circ$ wedge case despite the smaller volume. This counterintuitive behavior arises because $m = 2/3 < 1$ samples the continuous TM dispersion curve \emph{below} its first integer point, accessing a spectral region entirely inaccessible on geometries constrained to integer $m$. Figure~\ref{fig:wedge_angle} displays this non-monotonic dependence of frequency on cavity opening angle, with the minimum occurring near $\Phi = 270^\circ$ where $m = 2/3$ becomes allowed.

Table~\ref{tab:wedge_spectrum} presents the first six modes of the $90^\circ$ wedge, encompassing both TM and TE polarizations. All modes agree with theory to within $1.4\%$, with mean erro $0.8\%$. The spectrum includes sectoral modes ($k = 0$, hence $\nu = m$) at $m = 2/3, 4/3, 2, \ldots$ and tesseral modes ($k \geq 1$, hence $\nu > m$) such as Mode~3 with $\nu = 5/3$, $m = 2/3$ (where $k = 1$). The tesseral modes are unambiguously identified by their frequencies: no sectoral mode exists between the TM $m = 4/3$ mode at $9.93$~GHz and the TM $m = 2$ mode at $12.32$~GHz, yet Mode~3 appears at $11.25$~GHz, precisely matching the tesseral prediction for $\nu = 5/3$. Similarly, Mode~6 at $13.62$~GHz matches the tesseral prediction for $\nu = 7/3$, $m = 4/3$.

\begin{remark}[Tesseral modes for non-integer $m$]
The existence of tesseral modes for non-integer $m$ follows from the hypergeometric termination condition~\eqref{eq:nu_spectrum_wedge}. When $\nu = m + k$ with $k$ a positive integer, the hypergeometric function in~\eqref{eq:theta_hypergeometric} reduces to a polynomial of degree $k$, which is bounded at $\theta = \pi$. For example, Mode~3 has $m = 2/3$ and $\nu = 5/3$ (so $k = 1$), giving
\begin{equation}
\Theta(\theta) = \sin^{2/3}\theta \cdot {}_2F_1\left(-1, \tfrac{10}{3}; \tfrac{5}{3}; \sin^2(\theta/2)\right) = \sin^{2/3}\theta \cdot (1 - 2\sin^2(\theta/2)) = \sin^{2/3}\theta \cdot \cos\theta,
\end{equation}
which vanishes at both poles and has one nodal plane at $\theta = \pi/2$. This angular structure distinguishes tesseral modes from sectoral modes, which have no nodes between the poles.
\end{remark}

\begin{remark}[Comparison with all-integer case]
For wedge geometries yielding integer $m$ (e.g., the hemisphere with $\Phi = 180^\circ$, giving $m = n$), the condition $\nu = m + k$ reduces to the standard requirement that $\nu$ be an integer $\geq m$. The eigenvalue condition~\eqref{eq:nu_spectrum_wedge} thus unifies the integer and non-integer cases: regularity at both poles requires $\nu - m \in \mathbb{Z}_{\geq 0}$, regardless of whether $m$ itself is an integer.
\end{remark}
\subsection{Conical Truncation: Accessing the Continuous Branch Below $\nu = 1$}
\label{subsec:cone_validation}
While wedge geometries control the azimuthal index $m$, polar conical truncations provide independent control over the angular eigenvalue $\nu$. The physical mechanism is fundamentally different: the cone removes the coordinate singularity at $\theta = 0$, thereby relaxing the regularity constraint that forces $\nu$ to integer values on the full sphere.

On the full sphere, the angular eigenfunctions must remain finite at both poles. Near $\theta = 0$, the associated Legendre equation admits two linearly independent solutions: $P_\nu^m(\cos\theta)$, which is regular, and $Q_\nu^m(\cos\theta)$, which diverges. Demanding regularity at $\theta = 0$ selects the $P_\nu^m$ solution, and the further requirement of regularity at $\theta = \pi$ quantizes $\nu$ to the non-negative integers $\ell = 0, 1, 2, \ldots$ with $\nu = \ell \geq |m|$. This is the origin of the familiar integer angular momentum quantum numbers.

A cone at $\theta = \theta_c$ alters this picture. The angular domain becomes $\theta \in [\theta_c, \pi]$, and the north-pole regularity condition is replaced by a PEC boundary condition at $\theta = \theta_c$. For TM modes, this requires
\begin{equation}
\label{eq:cone_bc}
P_\nu^m(\cos\theta_c) = 0,
\end{equation}
which determines $\nu$ as a function of $\theta_c$ and $m$. Crucially, this boundary condition admits solutions with $\nu < 1$---the \emph{continuous branch} that is entirely absent from the full-sphere spectrum. For TM modes with $m = 0$, the regularity at $\theta = \pi$ (south pole) is automatically satisfied by $P_\nu(\cos\theta)$, so the eigenvalue $\nu$ is determined solely by Eq.~\eqref{eq:cone_bc}.

We solve this eigenvalue problem numerically as follows. For a given cone angle $\theta_c$, we seek the smallest $\nu > 0$ satisfying $P_\nu(\cos\theta_c) = 0$. The Legendre function $P_\nu(x)$ for non-integer $\nu$ is computed via its integral representation or hypergeometric series, and the root is found by standard bracketing methods. Once $\nu(\theta_c)$ is determined, the TM resonant frequency follows from the radial quantization condition $[x j_\nu(x)]'|_{x=ka} = 0$, giving
\begin{equation}
f = \frac{c \, x_{\nu,1}}{2\pi a},
\end{equation}
where $x_{\nu,1}$ is the first positive root of the derivative of the spherical Bessel function.

Figure~\ref{fig:cone_sweep} presents the comparison between this theoretical prediction and HFSS simulations for a north-pole cone with $r_{\mathrm{out}}$ varying from $0.1$ to $10~\mathrm{mm}$ (corresponding to $\theta_c$ from $0.4^\circ$ to $33.7^\circ$). The right axis displays the angular eigenvalue $\nu(\theta_c)$, which increases approximately linearly from $\nu \approx 0.07$ at small cone angles to $\nu \approx 0.41$ at $\theta_c = 33.7^\circ$. This approximately linear dependence,
\begin{equation}
\nu(\theta_c) \approx 0.074 + 0.010 \, \theta_c \quad (\theta_c~\text{in degrees}),
\end{equation}
emerges from the asymptotic behavior of the Legendre function zeros and provides a useful approximation for design purposes.

\begin{figure}[htbp]
\centering
\includegraphics[width=\textwidth]{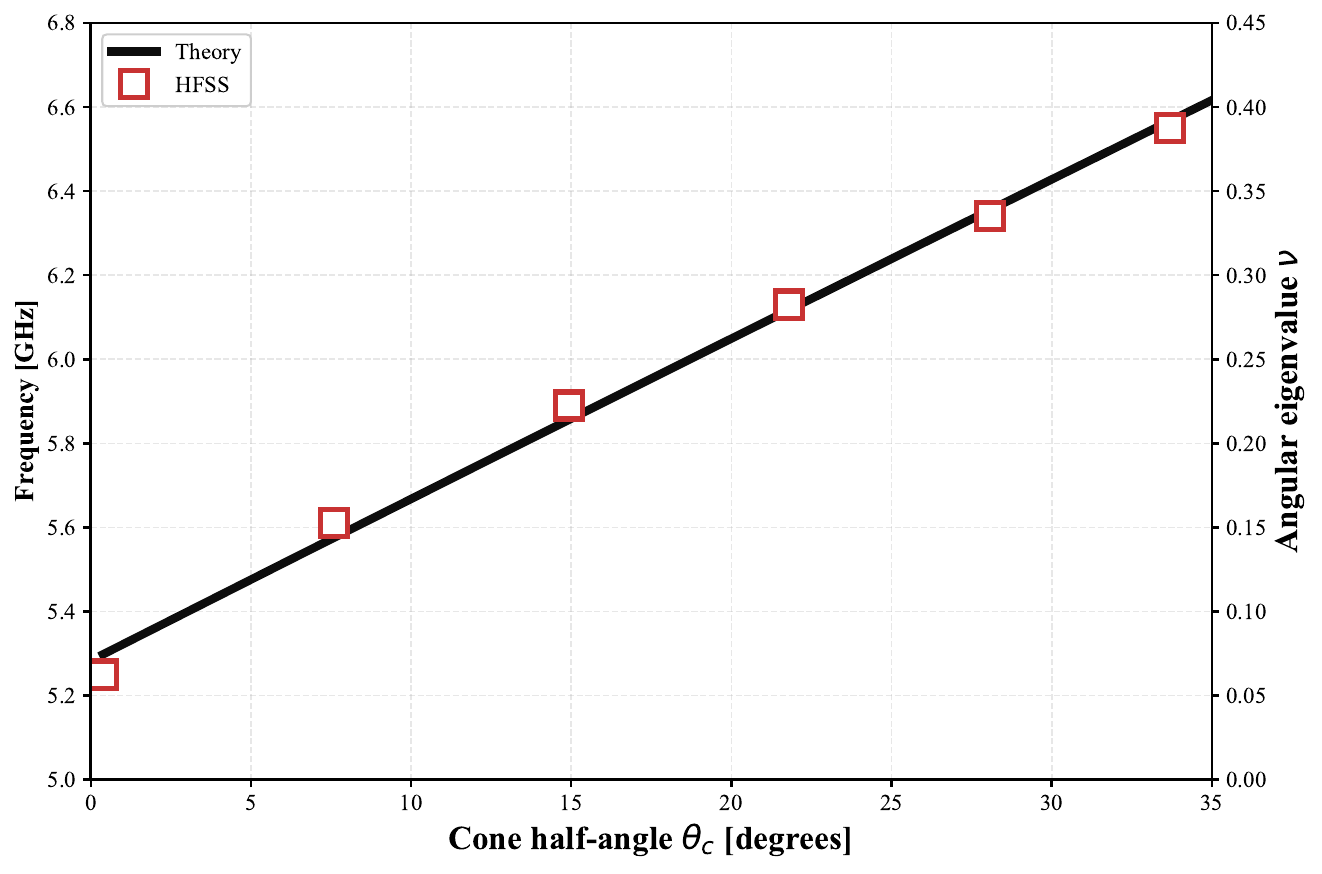}
\caption{Fundamental TM frequency (left axis) and angular eigenvalue $\nu$ (right axis) versus cone half-angle $\theta_c$ for a sphere with a single north-pole cone and no wedge ($m = 0$). The cone removes the $\theta = 0$ regularity constraint, allowing access to the continuous branch with $\nu < 1$. Theory curves are computed by solving
$P_\nu(\cos\theta_c) = 0$ for $\nu$, followed by TM radial quantization; HFSS data (symbols) agree within $0.7\%$. The full sphere ($\theta_c \to 0$) would require $\nu \geq 1$, yielding $f = 8.73~\mathrm{GHz}$—approximately $40\%$ higher than the cone-modified cavity.}

\label{fig:cone_sweep}
\end{figure}

The physical consequence is significant: frequencies range from $5.25~\mathrm{GHz}$ at small cone angles to $6.55~\mathrm{GHz}$ at $\theta_c = 33.7^\circ$, representing a 40\% reduction compared to the full-sphere fundamental at $8.73~\mathrm{GHz}$. This reduction occurs because the cone accesses the $\nu < 1$ branch of the TM dispersion relation, where smaller $\nu$ yields smaller Bessel roots $x_{\nu,1}$ and hence lower frequencies. Table~\ref{tab:cone_spectrum} summarizes the comparison, showing theory--HFSS agreement within $0.7\%$ across the full range of cone angles.

\begin{table}[t]
\centering
\caption{Fundamental TM frequency for sphere with north-pole cone ($a = 15~\mathrm{mm}$, $m = 0$). Angular eigenvalue $\nu$ computed from $P_\nu(\cos\theta_c) = 0$; frequency from TM radial quantization. Agreement is within 0.7\% for all cases.}
\label{tab:cone_spectrum}
\begin{tabular}{cccccc}
\toprule
$r_{\mathrm{out}}$ [mm] & $\theta_c$ [deg] & $\nu$ & $f_{\mathrm{theory}}$ [GHz] & $f_{\mathrm{HFSS}}$ [GHz] & Error [\%] \\
\midrule
0.1 & 0.38 & 0.078 & 5.28 & 5.25 & 0.6 \\
2.0 & 7.59 & 0.150 & 5.58 & 5.61 & $-0.5$ \\
4.0 & 14.93 & 0.224 & 5.86 & 5.89 & $-0.5$ \\
6.0 & 21.80 & 0.292 & 6.11 & 6.13 & $-0.3$ \\
8.0 & 28.07 & 0.354 & 6.35 & 6.34 & 0.2 \\
10.0 & 33.69 & 0.411 & 6.57 & 6.55 & 0.3 \\
\bottomrule
\end{tabular}
\end{table}
The smooth, monotonic frequency increase with cone angle---with no discontinuities or mode crossings---confirms that $\nu$ is indeed a continuous function of the boundary geometry. This behavior cannot be explained by integer quantization: there is no integer $\ell$ between 0 and 1, yet the eigenvalue $\nu$ varies continuously through this interval as the cone angle changes. The cone geometry thus provides direct experimental access to the continuous spectral branch predicted by the non-integer Sturm--Liouville theory.

\subsection{Combined Wedge and Cone: Two-Parameter Spectral Control}
\label{subsec:combined_validation}

The wedge and cone modifications act on different aspects of the angular eigenvalue problem: the wedge quantizes $m$ through azimuthal periodicity, while the cone determines $\nu$ through the polar boundary condition. When both are present, their effects are largely separable, enabling approximate two-parameter control over the cavity spectrum.

For a cavity with opening angle $\Phi$ and north-pole cone at $\theta_c$, the theoretical procedure is: (i) determine $m$ from Eq.~\eqref{eq:m_quantization}, (ii) solve the angular eigenvalue problem on $\theta \in [\theta_c, \pi]$ with boundary condition $P_\nu^m(\cos\theta_c) = 0$ to find $\nu(m; \theta_c)$, and (iii) apply radial quantization for the appropriate polarization. Table~\ref{tab:combined_spectrum} presents HFSS results for three cavity openings combined with a fixed cone ($r_{\mathrm{out}} = 0.1~\mathrm{mm}$, $\theta_c \approx 0.4^\circ$), compared against this theoretical prediction.

\begin{remark}[Relaxation of the $\nu \geq m$ constraint]
Unlike the wedge-only case of Section~\ref{subsec:wedge_validation}, the combined geometry permits $\nu < m$. This is because the cone removes the north pole from the domain, eliminating the regularity constraint that enforced $\nu \geq m$ (equivalently, $k \geq 0$ in Eq.~\eqref{eq:nu_spectrum_wedge}). The eigenvalue $\nu$ is now determined solely by the boundary condition $P_\nu^m(\cos\theta_c) = 0$ at the cone surface and regularity at the south pole, independent of north-pole behavior.
\end{remark}

\begin{table}[htbp]
\centering
\caption{Fundamental TM frequency for combined wedge and cone geometries ($a = 15~\mathrm{mm}$, $r_{\mathrm{out}} = 0.1~\mathrm{mm}$). The wedge determines $m$ via Eq.~\eqref{eq:m_quantization}; the cone modifies $\nu$ through the polar boundary condition. Theory computed by the two-step procedure described in text.}
\label{tab:combined_spectrum}
\begin{tabular}{ccccccc}
\toprule
$\Phi$ [deg] & $m$ & $\nu$ & $f_{\mathrm{theory}}$ [GHz] & $f_{\mathrm{HFSS}}$ [GHz] & Error [\%] \\
\midrule
355 (thin sheet) & 1/2 & 0.08 & 7.02 & 6.94 & 1.2 \\
270 & 2/3 & 0.07 & 7.53 & 7.57 & $-0.5$ \\
180 & 1 & 0.08 & 8.78 & 8.81 & $-0.3$ \\
\bottomrule
\end{tabular}
\end{table}

The results confirm approximate separability of the two parameters. The non-monotonic dependence on cavity opening---with the minimum near $\Phi = 355^\circ$ (thin sheet) arising from $m = 1/2 < 1$---persists in the presence of the cone. The cone produces a nearly uniform frequency shift across all cavity openings, modifying $\nu$ with only weak dependence on $m$ for small $\theta_c$. This approximate separability validates the interpretation of $(\nu, m)$ as nearly independent spectral parameters analogous to $(k_x, k_y)$ in planar waveguides, but here parameterizing the angular degrees of freedom on the modified sphere.

\begin{remark}[Physical interpretation of $\nu < m$]
In Table~\ref{tab:combined_spectrum}, all eigenvalues satisfy $\nu < m$---a regime inaccessible on the full sphere or with wedge-only modifications. Physically, the cone truncates the domain before the field can develop the polar structure required by larger $\nu$. As $\theta_c \to 0$, the eigenvalue $\nu \to m$ from below, recovering the sectoral wedge-only limit. For finite $\theta_c$, the cone ``pushes'' $\nu$ below $m$, enabling access to the spectral region below the sectoral curve.
\end{remark}

\subsection{Summary}
\label{subsec:validation_summary}
The HFSS simulations provide comprehensive validation of the spectral framework across three classes of geometries. 

\textbf{Wedge-only configurations:} Non-integer azimuthal indices $m = 2/3, 4/3, 2, 8/3, \ldots$ (for the $90^\circ$ wedge) are physically realized, with all six tested modes agreeing with theory to within $1.0\%$ (mean error $0.8\%$). Both sectoral modes ($\nu = m$) and tesseral modes ($\nu = m + k$, $k \geq 1$) are observed, with the tesseral modes unambiguously identified by their frequencies falling between consecutive sectoral values. The counterintuitive result that a $90^\circ$ wedge ($\Phi = 270^\circ$) has lower fundamental frequency than a hemisphere ($\Phi = 180^\circ$)---despite larger volume---directly confirms that the dispersion relations are continuous functions of $m$.

\textbf{Cone-only configurations:} The removal of the north-pole singularity grants access to the continuous branch with $\nu < 1$, producing frequencies 40\% below the full-sphere fundamental. The angular eigenvalue increases smoothly with cone angle as predicted by the modified Sturm--Liouville theory, with theory--simulation agreement within $0.7\%$ across the full parameter range.

\textbf{Combined wedge-and-cone configurations:} The two geometric parameters act approximately independently: the wedge controls $m$ while the cone predominantly controls $\nu$, and the resonant frequencies are accurately predicted by the two-step procedure. The cone removes the north-pole regularity constraint, permitting $\nu < m$---a regime inaccessible with wedge-only modifications. Values as low as $m = 1/2$ (thin-sheet geometry) become accessible in this combined configuration. This near-separability enables largely independent tuning of both angular indices, providing a practical pathway to a wide range of $(\nu, m)$ within the accessible parameter space.

Throughout all cases, the theoretical predictions are computed directly from the boundary-value problem---solving Eq.~\eqref{eq:m_quantization} for $m$, Eq.~\eqref{eq:cone_bc} for $\nu$, and the appropriate radial quantization condition for the frequency---with no fitting parameters or empirical adjustments. The consistent sub-percent agreement between theory and simulation establishes that the framework developed in this work accurately describes the electromagnetic phenomenology of domain-modified spherical resonators.

\section{Connection to Antenna Theory}
\label{antenna}
The continuous eigenvalue structure $\nu=\nu(m;\theta_c)$ obtained from truncating the spherical domain by a conical PEC boundary is mathematically closest to the classical separation-of-variables treatment of biconical (and related) antennas developed by Schelkunoff and Papas~\cite{Schelkunoff1943,Papas1950}. In this section we make that connection precise at the level of (i) the \emph{same} angular Sturm--Liouville operator, (ii) the \emph{different} radial radiation/closure conditions, and (iii) the \emph{different} physical interpretation of the singular endpoint behavior. Throughout, we emphasize that ``monopole-like'' or ``biconical-like'' refers to field topology and boundary-value structure, and does not imply any magnetic-monopole source.

\subsection{Common separated structure: the same angular operator}
\label{sec:antenna:common}
For both the closed spherical cavity considered in this paper and the classical biconical antenna region, the separated scalar generator (Debye potential or an equivalent scalar Helmholtz generator) admits the factorization
\begin{equation}
\label{eq:antenna_sep_ansatz}
\psi(r,\theta,\phi)=R(r)\,\Theta(\theta)\,e^{im\phi},
\qquad m\in\mathbb{Z},
\end{equation}
and the polar factor $\Theta(\theta)$ satisfies the associated Legendre Sturm--Liouville equation
\begin{equation}
\label{eq:antenna_theta_SL}
\frac{1}{\sin\theta}\frac{d}{d\theta}\!\left(\sin\theta\,\frac{d\Theta}{d\theta}\right)
+\left[\nu(\nu+1)-\frac{m^2}{\sin^2\theta}\right]\Theta=0,
\qquad \theta\in(\theta_1,\theta_2),
\end{equation}
with eigenvalue parameterized by $\lambda=\nu(\nu+1)$. The defining difference between the \emph{empty full sphere} and \emph{cone-truncated} domains is not the differential operator in \eqref{eq:antenna_theta_SL}, but the admissible endpoint conditions at $\theta=\theta_1,\theta_2$ (regularity vs conductor-enforced conditions). This ``domain is the physics'' viewpoint is exactly what makes $\nu$ discrete (full sphere) or continuous (truncated sphere / conical region).

\subsection{Biconical antennas: radiation closure and complex eigenvalues}
\label{sec:antenna:biconical}
In the Schelkunoff--Papas framework, the physical region is the space \emph{between} two coaxial PEC cones sharing an apex at the origin, with cone half-angle $\theta_c$ (and typically its mirror $\pi-\theta_c$). The PEC condition is applied on the cone surfaces, and the field is required to satisfy an outgoing-wave radiation condition as $r\to\infty$~\cite{Schelkunoff1943,Papas1950}.

At the level of separation of variables, the conical PEC surfaces select $\nu$ through an angular eigencondition of the form
\begin{equation}
\label{eq:antenna_cone_eig_general}
F_{\mathrm{pol}}\!\left(\nu;m,\theta_c\right)=0,
\end{equation}
where the functional $F_{\mathrm{pol}}$ depends on polarization (and on the specific Debye/VSH convention used to define ``TE'' and ``TM'' relative to $\hat r$). In practice, $F_{\mathrm{pol}}$ is obtained by enforcing $\mathrm{n}\times\mathrm{E}=0$ on $\theta=\theta_c$ and $\theta=\pi-\theta_c$ and reducing that field-level condition to an angular boundary condition on $\Theta$ (or on the appropriate Debye scalar). In the symmetric two-cone geometry, the spectrum $\nu=\nu_q(m;\theta_c)$ is continuous in $\theta_c$ and, crucially, becomes \emph{complex} once the radial closure is taken to be radiative.

The radial factor in a radiating antenna is not a standing-wave spherical Bessel function but an outgoing spherical Hankel function. For each admissible $\nu$ determined by \eqref{eq:antenna_cone_eig_general}, the radiating solution takes
\begin{equation}
\label{eq:antenna_radial_outgoing}
R_{\mathrm{rad}}(r)\propto h_{\nu}^{(1)}(kr),
\qquad k=\omega/c,
\end{equation}
so that the far-field behaves as $R_{\mathrm{rad}}(r)\sim e^{ikr}/r$ as $r\to\infty$. Because the radiation condition introduces loss (non-self-adjointness on an unbounded domain), the resulting spectral parameters are generally complex: either $k$ is complex for fixed $\nu$, or equivalently $\nu$ becomes complex for fixed $k$ depending on the chosen formulation. This is the rigorous sense in which the antenna ``eigenvalues'' differ from the cavity eigenvalues: the \emph{same} angular operator \eqref{eq:antenna_theta_SL} appears, but the \emph{radial} condition \eqref{eq:antenna_radial_outgoing} makes the overall problem non-Hermitian.

\subsection{Closed cavity with conical truncation: bounded domain and real spectrum}
\label{sec:antenna:cavity}
Our conical truncation problem is a bounded-domain counterpart of the same separated structure. The conical boundary is part of the cavity wall: in HFSS this corresponds to Boolean-subtracting a cone from the vacuum region so that no metal object remains; the PEC is assigned to the \emph{boundary of the vacuum domain} (sphere plus conical surface). In this setting the radial function is a standing-wave spherical Bessel function,
\begin{equation}
\label{eq:antenna_radial_cavity}
R_{\mathrm{cav}}(r)\propto j_{\nu}(kr),
\end{equation}
and the outer spherical PEC wall at $r=a$ imposes a transcendental quantization condition. In the Debye/VSH convention used in this paper, the final radial conditions take the form
\begin{equation}
\label{eq:antenna_radial_TE}
\text{TE family:}\qquad j_{\nu}(ka)=0,
\end{equation}
and
\begin{equation}
\label{eq:antenna_radial_TM}
\text{TM family:}\qquad \left.\frac{d}{dx}\big[x\,j_{\nu}(x)\big]\right|_{x=ka}=0,
\end{equation}
with $x=ka$ and $k=\omega/c$. Equations \eqref{eq:antenna_radial_TE}--\eqref{eq:antenna_radial_TM} are the transcendental equations referred to in the main text: once $\nu=\nu_q(m;\theta_c)$ is fixed by the conical boundary (single cone or double cone), the roots $x_{q,n}$ of \eqref{eq:antenna_radial_TE} or \eqref{eq:antenna_radial_TM} yield the discrete cavity frequencies
\begin{equation}
\label{eq:antenna_freq_from_root}
f_{q,n}=\frac{c}{2\pi a}\,x_{q,n}.
\end{equation}
Because the domain is bounded and the PEC boundary conditions define a self-adjoint eigenproblem, the spectrum is purely real (up to numerical rounding and any deliberate inclusion of material loss).

\subsection{What is (and is not) the ``antenna connection''?}
\label{sec:antenna:interpretation}
The connection is therefore structural: both problems reduce to the angular operator \eqref{eq:antenna_theta_SL} with $\nu$ selected by conical PEC boundary conditions, and both admit $\nu(\theta_c)$ that varies continuously with cone angle. The essential physical differences are equally clear: antennas are open systems closed by the Sommerfeld radiation condition \eqref{eq:antenna_radial_outgoing} (leading to complex spectra), whereas our cavity is a closed system closed by the spherical PEC wall \eqref{eq:antenna_radial_TE}--\eqref{eq:antenna_radial_TM} (leading to real spectra). This distinction matters when interpreting the small-$\nu$ endpoint behavior. In the cavity, the full-sphere limit enforces pole regularity and recovers the textbook quantization $\nu=\ell\in\mathbb{Z}_{\ge 0}$, with the well-known fact that $(\ell,m)=(0,0)$ yields vanishing electromagnetic fields (no source-free monopole radiation). In contrast, in antenna problems the apex region is physically a feed/defect: the mathematical allowance of singular local behavior near the apex encodes a localized excitation or impedance-supported boundary condition, and the radiating closure shifts the relevant singularities into complex parameter space. This is consistent with the classical observation that Hankel-type solutions are ``waves'' asymptotically and that a radiating structure necessarily includes a reactive near-field region in which the field is not purely propagating~\cite{Schelkunoff1943,Papas1950}.

Finally, we note two quantitative directions where the cavity and antenna viewpoints can be put into direct correspondence, but which we do not fully develop here. First, the characteristic-impedance formulas for biconical/discone structures depend on cone angle through a logarithmic function of $\theta_c$ (e.g. the classical $Z_c(\theta_c)$ expressions in~\cite{Schelkunoff1943,Papas1950}), whereas our cavity eigenvalues encode the same angular geometry through $\nu=\nu_q(m;\theta_c)$ and thus through the root structure of \eqref{eq:antenna_radial_TE}--\eqref{eq:antenna_radial_TM}. Establishing an explicit asymptotic bridge between $Z_c(\theta_c)$ and $\nu_q(m;\theta_c)$ would require matching the apex-regularization and feed model used in the antenna formulation to the corresponding self-adjoint extension (or tip-rounding) used in the cavity formulation. Second, classical electrically-small-antenna limits (Chu--Harrington type bounds) are fundamentally statements about stored energy versus radiated power; our cavity energy analysis provides the natural closed-domain analogue, but converting it into a radiating bound again requires a controlled opening procedure (replacing \eqref{eq:antenna_radial_cavity} by \eqref{eq:antenna_radial_outgoing} and tracking how the eigenvalues move off the real axis).

In summary, the conical-truncation cavity problem can be viewed as a closed-domain analogue of the Schelkunoff--Papas separated-variables description: the \emph{same} angular operator produces a continuous $\nu(\theta_c)$ controlled by geometry, while the \emph{radial} closure distinguishes standing-wave resonances from radiating solutions.

\section{Summary and Conclusions}

This paper establishes several results concerning electromagnetic modes of spherical cavities that extend the classical treatments in standard references. The integer quantization of angular indices $\ell$ and $m$, universally presented as a consequence of regularity and single-valuedness requirements, reflects a choice of boundary conditions rather than an intrinsic constraint of Maxwell's equations. The full sphere with regularity at both poles represents one self-adjoint realization of the angular Laplacian; alternative boundary conditions, implemented through conical or wedge-shaped conducting surfaces, select different self-adjoint extensions with different spectral properties.

The three families of spherical harmonic modes---sectoral, tesseral, and zonal---behave differently under continuation to non-integer indices. The sectoral family, characterized by $\nu = m$, admits a continuous dispersion curve on which $\sin^m\theta$ provides an exact, globally regular solution for any positive real $m$. The tesseral and zonal families exist only at isolated integer points on the full sphere, a restriction traceable to the singular behavior of Legendre functions at the south pole for non-integer degree. Wave impedances for azimuthal power flow satisfy the duality relation $Z_{\mathrm{TE}} Z_{\mathrm{TM}}=-\eta^{2}$ familiar from waveguide theory. We note that the extension to continuous indices permits configurations with $m > \nu$, a regime not accessible with integer spherical harmonics. The physical implications of this extended parameter space merit further investigation.

The dependence of angular eigenvalues on domain geometry has structural parallels in other physical contexts. In general relativity, conical deficits induced by cosmic strings modify azimuthal periodicity analogously to our wedge analysis, leading to non-integer effective angular quantum numbers. The continuous angular branches accessible through polar truncation illustrate how global boundary conditions---rather than local differential equations---determine the admissible mode spectrum. While the present work concerns classical electromagnetic cavities, the underlying Sturm-Liouville structure is shared by wave equations in curved spacetimes, suggesting that analogous geometric control may arise in systems with excised or singular polar regions.

The behavior at the boundary point $(\nu, m) = (0, 0)$ reveals a subtle distinction between potentials and fields. While the Debye potential $\Pi = j_0(kr)$ remains a well-defined, non-singular solution to the scalar Helmholtz equation, the electromagnetic field extracted from it vanishes identically---the potential lies in the kernel of the curl-curl operator that generates physical fields. This situation parallels pure gauge configurations in gauge field theory, where the vector potential is non-zero but the field strength vanishes. The physical interpretation of such ``potential-without-field'' modes, particularly regarding their role in cavity quantization and zero-point energy calculations, remains an open question that connects classical electromagnetic theory to deeper issues in mathematical physics.

Several directions merit further investigation. The three-fold classification of mode families and their distinct behaviors under continuation to non-integer parameters may have a deeper explanation through SO(3) representation theory; whether the continuous sectoral branch admits a natural group-theoretic characterization remains open.

\newpage
\appendix
\section{Connection Formulas for Legendre Functions}
\label{app:connection_formulas}
The behavior of the Legendre function $P_\nu(x)$ near $x = -1$ is determined by the connection formulas relating solutions at the two singular points of the Legendre equation. The fundamental result, derived using standard techniques of hypergeometric function theory~\cite{WhittakerWatson1927, DLMF}, expresses $P_\nu(x)$ in terms of functions centered at $x = -1$.

For $x$ near $-1$, the Legendre function of the first kind admits the representation
\begin{equation}
P_\nu(\cos\theta) = \frac{\sin(\nu\pi)}{\pi}\left[2\ln\left(\sin\frac{\theta}{2}\right) + \psi(\nu+1) + \gamma\right]Q^{(0)}(\theta) + Q^{(1)}(\theta),
\label{eq:Pnu_connection}
\end{equation}
where $\gamma \approx 0.5772$ is the Euler-Mascheroni constant, $\psi$ denotes the digamma function, and $Q^{(0)}$, $Q^{(1)}$ are functions regular at $\theta = \pi$. The crucial observation is that the coefficient $\sin(\nu\pi)$ of the logarithmic term vanishes if and only if $\nu$ is an integer. For non-integer $\nu$, the logarithmic singularity at $\theta = \pi$ renders $P_\nu(\cos\theta)$ inadmissible as a physical solution on the full sphere.

The associated Legendre function $P_\nu^m(x)$ with $m \neq 0$ exhibits analogous behavior. The connection formula takes the form
\begin{equation}
P_\nu^m(\cos\theta) = \frac{\Gamma(\nu+m+1)}{\Gamma(\nu-m+1)}\frac{\sin[(\nu-m)\pi]}{\pi}(\pi-\theta)^{-m}R^{(0)}(\theta) + R^{(1)}(\theta),
\label{eq:Pnum_connection}
\end{equation}
The singular term proportional to $(\pi - \theta)^{-m}$ is absent when $\nu - m$ is a non-negative integer. For the full sphere where $m$ must be an integer, this requires $\nu$ to be an integer at least as large as $|m|$. For wedge geometries where $m$ can be non-integer, this permits non-integer $\nu = m + k$ for non-negative integer $k$, consistent with the eigenvalue condition~\eqref{eq:nu_spectrum_wedge}.These connection formulas establish rigorously the results stated in Theorem~\ref{thm:spectral_structure}: the associated Legendre function $P_\nu^m(\cos\theta)$ is regular at both poles if and only if $\nu$ is a non-negative integer satisfying $\nu \geq |m|$.

\section{Operator-Theoretic Structure of the Angular Problem}
\label{app:operator}
The angular equation \eqref{eq:angular_ode} takes the Sturm-Liouville form
\begin{equation}
-\frac{1}{\sin\theta}\frac{d}{d\theta}\left(\sin\theta\frac{d\Theta}{d\theta}\right) 
+ \frac{m^2}{\sin^2\theta}\Theta = \nu(\nu+1)\Theta
\end{equation}
on the weighted Hilbert space $L^2((0,\pi), \sin\theta\,d\theta)$. Both endpoints 
$\theta = 0$ and $\theta = \pi$ are regular singular points. The Frobenius 
analysis of Section~\ref{math} establishes that solutions behave as 
$\Theta \sim \theta^{\pm|m|}$ near $\theta = 0$ and similarly near $\theta = \pi$.

For $m \neq 0$, only the $\theta^{+|m|}$ branch is square-integrable under the 
weight $\sin\theta \sim \theta$ near $\theta = 0$; the operator is in the 
\emph{limit-circle} case at both endpoints, and selecting the regular branch 
constitutes a boundary condition that determines the self-adjoint extension. 
On the full sphere, demanding regularity at both poles selects a unique 
self-adjoint realization with discrete spectrum $\nu = \ell \in \mathbb{Z}_{\geq |m|}$.
When the domain is truncated to $\theta \in [\theta_c, \pi]$ with a PEC boundary 
at $\theta_c$, the north-pole regularity condition is replaced by 
$\Theta(\theta_c) = 0$ (TM) or $\Theta'(\theta_c) = 0$ (TE). This changes the 
operator domain and hence the spectrum: the eigenvalue $\nu$ now depends 
continuously on $\theta_c$, with integer values recovered only in the limit 
$\theta_c \to 0$.

\section{The Second Solution for the Sectoral Case}
\label{app:second_solution}
Given that $\Theta_1(\theta) = \sin^m\theta$ solves the associated Legendre equation when $\nu = m$, the second linearly independent solution may be constructed by the method of reduction of order. This classical technique, exposited in standard texts on differential equations~\cite{CoddingtonLevinson1955, Arfken2012}, proceeds as follows.

Seeking a second solution in the form $\Theta_2 = \Theta_1 \cdot u(\theta)$ and substituting into the differential equation, one finds after simplification that $u'$ satisfies a first-order equation whose solution is
\begin{equation}
u'(\theta) = \frac{C}{\sin\theta \cdot \Theta_1^2(\theta)} = \frac{C}{\sin\theta \cdot \sin^{2m}\theta} = \frac{C}{\sin^{2m+1}\theta},
\label{eq:u_prime}
\end{equation}
where $C$ is an arbitrary constant. Integration yields
\begin{equation}
u(\theta) = C\int^\theta \frac{d\theta'}{\sin^{2m+1}\theta'}.
\label{eq:u_integral}
\end{equation}
Near $\theta = 0$, the integrand behaves as $\theta'^{-(2m+1)}$, giving
\begin{equation}
u(\theta) \sim -\frac{C}{2m}\theta^{-2m} \quad \text{as } \theta \to 0^+.
\label{eq:u_near_0}
\end{equation}
The second solution therefore satisfies
\begin{equation}
\Theta_2(\theta) = \Theta_1(\theta) \cdot u(\theta) \sim \sin^m\theta \cdot \theta^{-2m} \sim \theta^{-m}
\label{eq:Theta2_behavior}
\end{equation}
near the north pole, exhibiting precisely the singular Frobenius exponent identified in the text.

An analogous analysis near $\theta = \pi$ shows that $\Theta_2$ diverges as $(\pi-\theta)^{-m}$. The second solution is therefore singular at both poles, confirming that $\sin^m\theta$ is the unique globally regular solution of the associated Legendre equation for $\nu = m$.

\section{Spherical Bessel Functions}
\label{app:bessel}
The spherical Bessel functions arise naturally in problems with spherical symmetry and are related to the ordinary Bessel functions by
\begin{equation}
j_\nu(x) = \sqrt{\frac{\pi}{2x}}J_{\nu+1/2}(x), \qquad y_\nu(x) = \sqrt{\frac{\pi}{2x}}Y_{\nu+1/2}(x).
\label{eq:spherical_bessel_def}
\end{equation}
For integer orders, the spherical Bessel functions of the first kind reduce to elementary expressions involving trigonometric functions divided by powers of the argument. The first few are
\begin{align}
j_0(x) &= \frac{\sin x}{x}, \label{eq:j0}\\
j_1(x) &= \frac{\sin x}{x^2} - \frac{\cos x}{x}, \label{eq:j1}\\
j_2(x) &= \left(\frac{3}{x^2} - 1\right)\frac{\sin x}{x} - \frac{3\cos x}{x^2}. \label{eq:j2}
\end{align}

\section{Asymptotic Expansion of Bessel Zeros}
\label{app:zeros}
For large order $\nu$, the zeros of the spherical Bessel function $j_\nu(x)$ and its derivative admit asymptotic expansions that prove useful for understanding the high-frequency behavior of cavity modes.
The $n$th positive zero $x_{\nu,n}$ of $j_\nu(x)$ satisfies the McMahon expansion~\cite{Abramowitz1964, DLMF}
\begin{equation}
x_{\nu,n} = \nu + a_n\nu^{1/3} + \frac{3a_n^2}{20}\nu^{-1/3} + O(\nu^{-1}),
\label{eq:mcmahon}
\end{equation}
where $a_n$ denotes the magnitude of the $n$th zero of the Airy function $\text{Ai}(-a_n) = 0$. The first few values are $a_1 \approx 2.338$, $a_2 \approx 4.088$, $a_3 \approx 5.521$.
For the first zero specifically:
\begin{equation}
x_{\nu,1} \approx \nu + 1.856\nu^{1/3} + O(\nu^{-1/3}).
\label{eq:first_zero_asymp}
\end{equation}
Inverting this relation yields the asymptotic dispersion formula quoted in the text:
\begin{equation}
\nu(x) \approx x - 1.856 x^{1/3} + O(x^{-1/3}),
\label{eq:nu_of_x}
\end{equation}
or in terms of frequency for a cavity of radius $a$:
\begin{equation}
m(f) \approx \frac{2\pi f a}{c} - 1.856\left(\frac{2\pi f a}{c}\right)^{1/3}.
\label{eq:dispersion_asymp}
\end{equation}
This formula provides an excellent approximation to the sectoral dispersion curve for $m \gtrsim 3$, with errors below one percent for $m > 5$.


\begin{thebibliography}{99}

\bibitem{Stratton1941}
J. A. Stratton, \textit{Electromagnetic Theory} (McGraw-Hill, New York, 1941).

\bibitem{MorseFeshbach1953}
P. M. Morse and H. Feshbach, \textit{Methods of Theoretical Physics} (McGraw-Hill, New York, 1953).

\bibitem{Jackson1999}
J. D. Jackson, \textit{Classical Electrodynamics}, 3rd ed. (Wiley, New York, 1999).
\bibitem{Pozar2012}
D. M. Pozar, \textit{Microwave Engineering}, 4th ed. (Wiley, Hoboken, NJ, 2012).

\bibitem{BakrAmari2023}
M.~Bakr and S.~Amari, ``Singular Azimuthally Propagating Electromagnetic Fields,'' \href{https://arxiv.org/abs/2305.08869}{\emph{arXiv:2305.08869} (2023)}.

\bibitem{BakrAmari2025}
M.~Bakr and S.~Amari, ``Theory of Azimuthally Propagating Electromagnetic Waves in Cylindrical Cavities,'' \href{https://arxiv.org/abs/2505.04756}{\emph{arXiv:2505.04756} (2025)}.

\bibitem{CoddingtonLevinson1955}
E. A. Coddington and N. Levinson, \textit{Theory of Ordinary Differential Equations} (McGraw-Hill, New York, 1955).

\bibitem{ReedSimon1975}
M. Reed and B. Simon, \textit{Methods of Modern Mathematical Physics, Vol. II: Fourier Analysis, Self-Adjointness} (Academic Press, New York, 1975).

\bibitem{Zettl2005}
A. Zettl, \textit{Sturm-Liouville Theory}, Mathematical Surveys and Monographs, Vol. 121 (American Mathematical Society, Providence, RI, 2005).

\bibitem{Debye1909}
P. Debye, ``Der Lichtdruck auf Kugeln von beliebigem Material,'' \textit{Ann. Phys.} \textbf{335}, 57--136 (1909).


\bibitem{Hobson1931}
E. W. Hobson, \textit{The Theory of Spherical and Ellipsoidal Harmonics} (Cambridge University Press, Cambridge, 1931).

\bibitem{WhittakerWatson1927}
E. T. Whittaker and G. N. Watson, \textit{A Course of Modern Analysis}, 4th ed. (Cambridge University Press, Cambridge, 1927).

\bibitem{DLMF}
NIST Digital Library of Mathematical Functions, F.~W.~J.~Olver et al., eds.,
\href{https://dlmf.nist.gov/}{\emph{https://dlmf.nist.gov/}}, Release 1.1.10 (2023).

\bibitem{Collin1991}
R. E. Collin, \textit{Field Theory of Guided Waves}, 2nd ed. (IEEE Press, New York, 1991).

\bibitem{AharonovBohm1959}
Y.~Aharonov and D.~Bohm,
``Significance of electromagnetic potentials in the quantum theory,''
\emph{Phys.\ Rev.} \textbf{115}, 484--491 (1959).

\bibitem{Berry1984}
M.~V.~Berry,
``Quantal phase factors accompanying adiabatic changes,''
\emph{Proc.\ R.\ Soc.\ Lond.\ A} \textbf{392}, 45--57 (1984).


\bibitem{Harrington2001}
R. F. Harrington, \textit{Time-Harmonic Electromagnetic Fields} (IEEE Press, New York, 2001).

\bibitem{Abramowitz1964}
M. Abramowitz and I. A. Stegun, eds., \textit{Handbook of Mathematical Functions} (Dover, New York, 1964).

\bibitem{Arfken2012}
G. B. Arfken, H. J. Weber, and F. E. Harris, \textit{Mathematical Methods for Physicists}, 7th ed. (Academic Press, Boston, 2012).

\bibitem{Rhodes1976}
J. D. Rhodes, \textit{Theory of Electrical Filters} (Wiley, London, 1976).

\bibitem{Cameron2018}
R. J. Cameron, C. M. Kudsia, and R. R. Mansour, \textit{Microwave Filters for Communication Systems}, 2nd ed. (Wiley, Hoboken, NJ, 2018).

\bibitem{Hunter2001}
I. C. Hunter, \textit{Theory and Design of Microwave Filters} (IET, London, 2001).

\bibitem{Balanis2016}
C. A. Balanis, \textit{Antenna Theory: Analysis and Design}, 4th ed. (Wiley, Hoboken, NJ, 2016).

\bibitem{Chu1948}
L. J. Chu, ``Physical limitations of omni-directional antennas,'' \textit{J. Appl. Phys.} \textbf{19}, 1163--1175 (1948).

\bibitem{Harrington1960}
R. F. Harrington, ``Effect of antenna size on gain, bandwidth, and efficiency,'' \textit{J. Res. Nat. Bur. Stand.} \textbf{64D}, 1--12 (1960).

\bibitem{Schelkunoff1943}
S. A. Schelkunoff, ``Theory of antennas of arbitrary size and shape,'' \textit{Proc. IRE} \textbf{29}, 493--521 (1941); S. A. Schelkunoff and H. T. Friis, \textit{Antennas: Theory and Practice} (Wiley, New York, 1952).

\bibitem{Papas1950}
C. H. Papas and R. King, ``Radiation from wide-angle conical antennas fed by a coaxial line,'' \textit{Proc. IRE} \textbf{39}, 49--51 (1951).

\end{thebibliography}
\end{document}